\documentclass[11pt]{elsarticle}

\usepackage{color}

\usepackage{amsmath,amsthm}
\usepackage{lineno,hyperref}
\modulolinenumbers[5]

\journal{Operations Research Letters}
\usepackage{times}

\newtheorem{lemma}{Lemma}[section]
\newtheorem{theorem}[lemma]{Theorem}

\newtheorem{corollary}[lemma]{Corollary}

\newcommand{\eps}{\epsilon}

\definecolor{forest}{rgb}{0, .6, 0}
\definecolor{darkmagenta}{rgb}{.6, 0, .6}

\begin{document}
\begin{frontmatter}
\title{Symmetric Linear Programming Formulations for Minimum Cut with  Applications to TSP}




\author[bobaddress]{Robert D. Carr\footnote{Work done while at  Sandia National Laboratories. This material is based upon research supported in part by the U. S. Office of Naval Research under award number N00014-18-1-2099.}}
\author[jenaddress]{Jennifer Iglesias\footnote{Work done while at Carnegie Mellon University. This material is based upon research supported in part by the National Science Foundation Graduate Research Fellowship Program under Grant No. 2013170941.}}
 \author[giuseppeddress]{Giuseppe Lancia} 
 \author[benaddress]{Benjamin Moseley\footnote{Work done while at Washington University in St. Louis. This matieral is based upon research supported in part by a Google research Award and NSF Grants CCF-1824303, CCF-1830711 and CCF-1733873}}

\address[bobaddress]{University of New Mexico}
\address[jenaddress]{Waymo}
\address[giuseppeddress]{D.I.M.I., University of Udine, Italy}
\address[benaddress]{Carnegie Mellon University}





\begin{abstract}
We introduce multiple symmetric LP relaxations for minimum cut problems. The relaxations give optimal and approximate solutions when the input is a Hamiltonian cycle.  We show that this leads to one of two interesting results.  In one case, these LPs always give optimal and near optimal solutions, and then they would be the smallest known symmetric LPs for the problems considered.  Otherwise,  these LP formulations give strictly better LP relaxations for the traveling salesperson problem than the subtour relaxation. We have the smallest known LP formulation that is a $\frac{9}{8}$-approximation or better for min-cut. In addition, the LP relaxation of min-cut investigated in this paper has interesting constraints; the LP contains only a single typical min-cut constraint and all other constraints are typically only used for max-cut relaxations. 
\end{abstract}

\begin{keyword}
Minimum cut \sep Linear Program \sep Approximation \sep Traveling Salesperson Problem


\end{keyword}

\end{frontmatter}

\section{Introduction}

The minimum cut problem is one of the most fundamental problems in computer science and has numerous applications in other fields.  Consequently, there has been a vast amount of research on the topic (see \cite{ChekuriGKLS97} for an overview).   In this problem you are given an undirected graph $G$ on $n$ nodes and an undirected non-negative cost function $c$ on \emph{every} pair of nodes denoting the capacity of the edges in the graph.  In the global cut problem the goal is to find a non-empty set $S \subset V = \{1, 2, \dots, n\}$ where $S \neq V$ such that the sum of the edge weights crossing the cut $(S, V\setminus S)$ is minimized.  In the $s,t$ cut problem it must be the case that $s \in S$ and $t \in V \setminus S$ where $s,t \in V$ are given.  In the fixed sized partition minimum cut problem, the set $S$ must have size $\alpha$ where $\alpha$ is an input parameter.  The $s,t$ cut and global minimum cut problems are polynomially time solvable, but the fixed size partition minimum cut problem is known to be NP-Hard.

It is well known that the minimum cut problem yields polynomial time algorithms and a variety of efficient algorithms are known \cite{ChekuriGKLS97,Karger00} as well as efficient parallel algorithms \cite{KargerS96,LattanziMSV11}.  One widely used technique is mathematical programming.  A few integer linear programs are known for the minimum cut problems \cite{ChekuriGKLS97, carr2009compacting} and for several of them the LP relaxation  gives exact solutions to the minimum cut problem (global or $s,t$).  Like most problems, finding a linear programming relaxation for a given problem that is different than the obvious relaxation is generally a challenging task.   The smallest known linear programming relaxation of the global mincut problem,   called the $w$-LP, was given in \cite{carr2009compacting}.  This formulation has $O(n^2)$ variables and $O(n^3)$ constraints.  From a theoretical viewpoint, it is interesting to determine the bounds of the smallest linear program for a given problem, since small linear programs can lead to more efficient algorithms in practice.  For the minimum cut problem, finding a small relaxation not only can improve the performance for finding the minimum cut of a graph, but 
can be also instrumental to the solution  of other fundamental problems. 

It was shown by  in \cite{Yannakakis91} that no symmetric LP can have polynomial size for TSP or perfect matching.  Roughly speaking, an LP is said to be symmetric if the nodes in a graph can be permuted without changing the feasible region.  Further, it is known that certain classes of LP's can be lifted to give symmetry \cite{Yannakakis91}. This was originally thought to be a way to show that not every problem in P has a polynomial size LP. 
In particular, Yannakakis proved also that there are no compact LP formulations to perfect matching that
have symmetricity, and he argued that the lack of existence of a compact LP formulation to perfect matching that has symmetricity is strong evidence there is no compact LP formulation to
perfect matching (even asymmetric).  Indeed, the underlying perfect matching problem has
symmetricity.   Thus, if there were such an asymmetric LP formulation then
intuitively one could likely lift it to a slightly larger LP formulation that does have symmetricity.  But this
contradicts Yannakakis' result.
It turns out that Yannakakis was correct, and
recently, \cite{Rothvoss17} showed that the matching polytope does not have a polynomial size linear program.

\bigskip
\noindent \textbf{Motivation:}  There is a considerable amount of work on linear programming formulations.  For problems which are in  {\bf P}, typically non-exact linear programming formulations are only of interest if they are very small.  Small non-exact linear programs can be interesting because, in many cases, they can be used to give fast approximate guarantees on the optimal solution's value.  

In the Traveling Salesperson Problem (TSP), a salesperson wants to visit a set of cities and return home~\cite{junger1995traveling, goemans1995worst}.  There is a cost $c_{ij}$ of traveling from city $i$ to city $j$, which is the same in either direction for the Symmetric TSP.  The objective is to visit each city exactly once, minimizing total travel costs. The subtour elimination linear program is a natural and easily solvable linear programming relaxation of TSP:
\begin{displaymath}
\begin{array}{llll}
\mbox{minimize}&c\cdot x \\
\mbox{subject to} 
&\sum_{i} x_{ij} = 2 & \forall j \in V \\
&\sum_{ij \in \delta(S)} x_{ij} \geq 2 & \forall \emptyset \neq S\subset V  \\
&x_{ij} \geq 0 & \forall ij \in E(V).
\end{array}
\end{displaymath}
There is no natural compact LP relaxation for TSP which is stronger than the subtour elmination linear program relaxation.  The TSP uses the minimum cut as a subroutine \cite{carr2009compacting}. 

Despite most work focusing on finding small non-exact linear programming formulations, there is another property that non-exact linear programming formulations can have for problems in {\bf P} which is of interest to study.   Consider the minimum cut problem.  Say that one finds a linear programming formulation which returns a non-exact solution, yet one can prove that the LP's solution is \emph{exact} when the cost function (input graph) is a Hamiltonian cycle.   Although the LP is non-exact for general graphs, we actually can use the LP to improve the subtour relaxation for the TSP problem.  Indeed, we show in this paper,  if the LP relaxation were {\em not} an exact formulation of the minimum cut problem (but exact on Hamiltonian cycles), it can be used to strengthen the subtour LP relaxation for the TSP problem. 

We call this feature of the LP \emph{interesting}.  Interesting LPs are not limited to the relationship between minimum cut and TSP.  For example, suppose we had a compact relaxation of the minimum directed cut problem, but not necessarily an exact formulation. Further assume that this relaxation gives the exact answer of 1 when the cost function is an arborescence. Then, either this is an exact formulation of  directed cut or we can strengthen the LP relaxation for the directed arborescence problem. 

This leads to a somewhat unintuitive situation. One would want to find an LP formulation that is exact for computing a minimum cut when the graph is a Hamiltonian cycle, but simultaneously as loose as possible for computing the minimum cut for any other graph.  In this case, one can strengthen the subtour LP relaxation for the TSP problem by the largest amount.  Indeed, any time the minimum cut LP gives a non-exact solution this is a certificate that the objective function is outside the TSP polytope, even if this solution were in the subtour TSP relaxation.   We can then cut off this point in a new TSP relaxation, thus making it better than the subtour relaxation.

\bigskip
\noindent \textbf{Results.}  In this work, we consider linear programming relaxations for the fixed size partition and global minimum cut problems.  We begin by giving a symmetric relaxation assuming one knows $\alpha := |S|$, the size of one side of the cut.  We will call this the $\alpha$-LP.  Note that one can assume $\alpha \leq n/2$ since one side of the cut must have at most $n/2$ vertices. Our main results are the following:

\begin{itemize}
\item The $\alpha$-LP gives the optimal solution for the fixed size partition minimum cut problem when the input graph is an unweighted Hamiltonian cycle for any $\alpha \leq n/2$ (has the Hamiltonian cycle property).
\item The $\beta$-LP is a relaxation of the $\alpha$-LP which also has the Hamiltonian cycle property.
\item The $\gamma$-LP is a further relaxation of the $\alpha$-LP which is within $\frac{8}{9}$ of the $\alpha$-LP, but creates a smaller set of LPs which need to be solved to approximate the minimum cut.
\end{itemize}

Now, we do not know for general graphs if the $\alpha$-LP returns at least that of the minimum global cut.  However, if this is the case, then by extending the $\alpha$-LP to include all possible values of $\alpha$ this will be the smallest known LP for the global minimum cut problem that is symmetric.  The extension of the $\alpha$-LP to include all values of $\alpha$ has  $O(n^3)$ variables and $O(mn^2)$ constraints where $m$ is the size of the support of the cost function.   This compares to the previous best known formulation with $O(mn^2)$ variables and $O(mn^2)$ constraints which can be obtained by extending the standard formulation of $s,t$ minimum cut. Beyond being interesting from a theoretical viewpoint, symmetric LPs are also typically easier to analyze than other LPs.  

Alternatively, say that the $\alpha$-LP admits feasible solutions of value smaller than the minimum global cut.  In this case, we can use the $\alpha$-LP to obtain new valid inequalities for the TSP.  Here, by leveraging the fact that the $\alpha$-LP is optimal for Hamiltonian cycle graphs, we can show that this LP can be exploited to strengthen the standard subtour relaxation for TSP.  We show this by using compact optimization methods. Thus, we have a favorable situation, i.e.,  this new formulation either gives one of two interesting conclusions.

We then relax the $\alpha$-LP further and create a new model we call the $\beta$-LP. While the $\alpha$-LP assumes $\alpha$ is an integer, in the $\beta$-LP we relax this assumption by allowing $\alpha$ to be the convex combination of two consecutive integers. We show that the $\beta$-LP has the same Hamiltonian cycle property as the $\alpha$-LP even though the $\beta$-LP is a relaxation. We then show that in some special cases that the $\alpha$-LP and $\beta$-LP are lower bounded by the minimum cut. 

Lastly, we extend the $\beta$-LP relaxation and give another relaxation which  is $1/(1+\eps)$-approximation for the minimum cut problem on Hamiltonian cycle graphs and a $\frac{8}{9}$-approximation for the minimum cut problem when $\epsilon = 1$. This new relaxation has $O(n^2\log_{1+\eps} n)$  variables and   $O(mn\log_{1+\eps} n)$  constraints for any $\eps >0$.  We obtain this by extending the $\beta$-LP to express $\alpha$ as the convex combination of two integers, $\gamma$ and $\lceil (1+\eps)\gamma \rceil$. The $\beta$-LP has size  $O(n^2)$ variables by $O(mn)$ constraints.  By allowing variable sizes in the partitions where $\alpha$ is in the range $\gamma \leq \alpha \leq 2\gamma$, we can geometrically choose a logarithmic number of values for $\gamma$. For these values, we bound the objective of the $\gamma$-LP  by $16/9$ on Hamiltonian cycle graphs. Although  this LP is not exact for the minimum cut problem on Hamiltonian cycle graphs, one can still use this relaxation to potentially improve the solution for the TSP relaxation, while yielding a significantly smaller LP.

\section{Preliminaries}

We are going to construct symmetric LPs that solve or approximate fixed sized partition and  
(global) minimum cut.    
A cut in an undirected graph $G=(V,E(V))$ partitions the vertex set $V$ of 
$n$ vertices into a set of 
top nodes and a set of bottom nodes, with the edges in the cut going between 
these two sets.  Given a vector $c_{ij}\geq 0$ of capacities 
and edge variables $x_{ij}$ indicating whether edge $\{i,j\}$ is in a 
cut for $\{i,j\}\in E(V)$ we want to minimize $c\cdot x$.  Minimum $s$-$t$ cut 
where $s$ is specified to be on the bottom and $t$ is specified to be on the 
top is the most usual type of minimum cut problem (without such a specified 
pair of nodes it is the minimum global cut problem). 

For the definition of a symmetric linear program, we use the definition given in \cite{Yannakakis91}.  Let $\pi$ denote a permutation of the nodes.  In the complete graph, a permutation of the nodes defines a mapping of the costs of the edges of the graph such that each cost $c_{ij}$ is mapped into  $c_{\pi(i)\pi(j)}$.  Let $P(x,y)$ denote a polytope over the variables $x_{ij}$ on the edges of a graph and other variables $y$.   A polytope $P$ of a linear program is said to be \emph{symmetric} if for all permutations $\pi$ of the nodes the new variables $y$ can be extended so $P$ remains invariant.  A linear program can be seen to yield a symmetric polytope if renaming the variables for the nodes yields the same linear program.  Note that, however, a linear program need not necessarily have this property to be symmetric.

Minimum $s$-$t$ cut has 
an easy (naturally integer) LP formulation, shown below.  Let node variables $h_i$ 
indicate whether $i$ is on top ($h_i=1$)  or on bottom ($h_i=0$) in a cut.   
Furthermore, let $x_{ij}$ be edge variables indicating which edges are in the cut.
\begin{displaymath}
\begin{array}{llll}
\mbox{minimize}&c\cdot x \\
\mbox{subject to} 
&x_{i,j} \geq h_j - h_i& \forall ij\in E(V) \\
&x_{i,j} \geq h_i - h_j& \forall ij\in E(V) \\
&0=h_s \leq h_i \leq h_t = 1& \forall i \in V.
\end{array}
\end{displaymath}

One can also formulate the NP-hard maximum cut problem.  A naive IP 
formulation and LP relaxation is 
\begin{displaymath}
\begin{array}{llll}
\mbox{maximize}&c\cdot x \\
\mbox{subject to} 
&x_{i,j} \leq h_j + h_i& \forall ij\in E(V) \\
&x_{i,j} \leq 2 - h_j - h_i& \forall ij\in E(V) \\
&0\leq h_i \leq 1, \, h \mbox{ integer}& \forall i \in V.
\end{array}
\end{displaymath}
However, setting $h=1/2$ shows that this is a weak LP relaxation.  

Suppose one constrained minimum cut so that it has a fixed sized partition with exactly 
$\alpha \leq n/2$ nodes on top.  Note that one can always assume $\alpha \leq n/2$ since one side of the cut always at most $n/2$ nodes.  Call this the minimum $\alpha$-cut problem, 
which is also an NP-hard problem \cite{GareyJ1979}.  We note that when $\alpha = n/2$ this is equivalent to the minimum edge bisection problem.  We will now construct an LP relaxation 
for this problem.  Let $\delta(i)$ be the edges adjacent to $i$ in $E$ and $x(\delta(i)) = \sum_{ij \in \delta(i)} x_{i,j}$.  For $i \in V$ consider $x(\delta(i))$. Suppose $h_i=0$.  
Then there are $\alpha$ nodes opposite $i$, so $x(\delta(i))=\alpha$.  Now, 
suppose $h_i=1$.  Then there are $n-\alpha$ nodes opposite $i$, so 
$x(\delta(i))=n-\alpha$.  Hence, it is always the case that 
\begin{displaymath}
x(\delta(i)) = (n-2\alpha)h_i + \alpha.
\end{displaymath}
Also, for distinct $i,j,k\in V$ we have the (metric) triangle inequality 
property that $x_{i,j} \leq x_{i,k} + x_{k,j}$.  

Here is the first LP relaxation: 
\begin{align}
\mbox{minimize\ } & c\cdot x^\alpha  \label{first-lp} \\
\mbox{subject to}  
 \sum_{i = 1}^n h^\alpha_i &=  \alpha \nonumber \\
x^\alpha(\delta(i)) &=    (n - 2\alpha)h^\alpha_i + \alpha&\forall i\in V \nonumber \\
x^\alpha_{i,j} &\leq    h^\alpha_i + h^\alpha_j &\forall \{i,j\} \in E \nonumber \\ 
x^\alpha_{i,j} &\leq   x^\alpha_{k,i} + x^\alpha_{k,j} &\mbox{ distinct }i,j,k\in V \nonumber\\
  x^\alpha,h^\alpha &\geq  0 \nonumber . 
\end{align}
The above relaxation is stronger than we need for the results in this paper, we will refer to it as the old $\alpha$-LP. It can be relaxed to the following, which we call the (new) $\alpha$-LP:
\begin{align}
\mbox{minimize\ } & c\cdot x^\alpha  \label{symlp}\\
\mbox{subject to}  
 \sum_{i = 1}^n h^\alpha_i &\leq  \alpha \label{constalp:one} \\
 \sum_{ij \in E} x^\alpha_{i,j} & \geq   \alpha(n-\alpha) & \label{constalp:two} \\
x^\alpha_{i,j} &\leq    h^\alpha_i + h^\alpha_j &\forall \{i,j\} \in E \label{constalp:three}\\ 
x^\alpha_{i,j} &\leq   x^\alpha_{k,i} + x^\alpha_{k,j} &\mbox{ distinct }i,j,k\in V : c_{j,k}>0 \label{constalp:four}\\
x^\alpha,h^\alpha &\geq  0. \nonumber
\end{align}
We only need the triangle inequalities $x_{i,j} \leq x_{i,k} + x_{k,j}$ when $c_{j,k} > 0$ in our proof of Theorem~\ref{thm:alphaLPHamCycle}. The remaining triangle inequalities are unused. We obtain constraint~\eqref{constalp:two} by summing the constraints giving $x(\delta(i))$ in the previous LP. We emphasize that this $\alpha$-LP has size $O(mn)\times O(n^2)$ as we only need a subset of the triangle inqualities.

Denote the $\alpha$-LP relaxation's feasible region by 
\begin{displaymath}
A^{\alpha}z^{\alpha} \geq b^{\alpha},z^{\alpha} \geq 0,
\end{displaymath}
where $z^{\alpha} = [x^\alpha,h^\alpha]$. The $\alpha$-LP constraint~\eqref{constalp:two} is the only constraint which typically appears in min-cut relaxations while constraints~\eqref{constalp:one},~\eqref{constalp:three}, and~\eqref{constalp:four} generally appear in max-cut linear programming relaxations. 

In the next section we will bound the objective value of this LP.  Before we do that, we introduce the $w$-LP of \cite{carr2009compacting} which the LP can be compared to.  In this linear program $w_k = 1$ when $k$ is the last node (numerically) on top in the cut (where, w.l.o.g., we can assume that node $n$ is always on bottom).

\begin{equation}\label{soda}
\begin{array}{lrlll}
\mbox{minimize\ } c\cdot x \\
\mbox{subject to} \; 
&x_{i,j} + 2w_k &\leq& x_{k,i} + x_{k,j}&\forall k<i<j \\
& x_{ki} &\geq& w_k &\forall k<i \\
& \sum_{k=1}^{n-1}w_k &=& 1,\\
&0\leq  x,w &\leq& 1. 
\end{array}
\end{equation} 

While the $w$-LP is asymmetric, the $\alpha$-LP is symmetric.

\section{Hamiltonian Cycle Property}
Denote the convex hull of the set of the incidence vectors of all the cuts of
a complete graph $G=(V,E)$ by $Q$.  An LP whose feasible region $P\supset Q$
that minimizes or maximizes an objective function given by edge coefficients
$c$ is a relaxation of the cut problem.  If the LP is meant to bound the
{\it minimum} cut of $G$ it is said to be a relaxation of the {\it minimum} cut
problem.  We introduce the {\it dominant} polyhedron of a polytope.  The
polyhedron $dom(Q)$ is a set of points $y$ such that there exists an $x\in Q$
such that $y\geq x$.  An LP relaxation of
the minimum cut problem is an LP mincut formulation if its feasible region
$P \subset dom(Q)$.

We now introduce a surprising principle for creating LP relaxations of the TSP.
Consider {\it any} linear program
\begin{equation} \label{anylp}
  \begin{array}{lrll}
    \mbox{minimize }&c\cdot x \\
    \mbox{subject to} \\
    &A\cdot x &\geq& b \\
    &x &\geq& 0
  \end{array}
\end{equation}
on the edge variables $x$ of a complete graph (and possibly other variables).
If for any Hamilton cycle, when its incidence vector is plugged into $c$ the
minimum of this LP is a constant $\eta > 0$, then this LP is said to satisfy
the {\it Hamilton cycle property}.  If this LP is symmetric on the $x$ edge 
variables and the nodes, its minimum $\eta$ will automatically be a constant
for all Hamilton
cycles.  If this constant is strictly positive, the LP will then satisfy the
Hamilton cycle property.  An LP that satisfies the Hamilton cycle
property gives rise to an LP relaxation of the TSP through duality methods.
Let $\hat{c}$ be the costs for the TSP.  Let $c$ be the variables of the TSP
relaxation, which are so named to relate to the cost function of the LP
satisfying the Hamilton
cycle property.  The TSP relaxation is given by
\begin{equation} \label{tsprelax}
  \begin{array}{lrlll}
    \mbox{minimize }&\hat{c}\cdot c \\
    \mbox{subject to } \\
    &y\cdot A &\leq& c \\
    &y\cdot b &\geq& \eta \\
    &c(\delta(i)) &=& 2 &\forall i\in V \\
    &y,c &\geq& 0
  \end{array}
\end{equation}

With respect to the subtour elimination relaxation of the LP, the idea of LP~\eqref{tsprelax} is to keep the degree constraints, but replace
the subtour elimination constraints with the dual of LP~\eqref{anylp}.
Indeed, if the objective function $c^*$ plugged into LP~\eqref{anylp} satisfying the
Hamilton cycle property has a minimizer $x^*$ with objective value strictly
less than $\eta$, this LP solution is a certificate that $c^*$ is not in the
TSP polytope, 
and we say $c^*$ is {\em certified by LP~\eqref{anylp} not to be in the
TSP polytope}.  Moreover, $(y,c^*)$ is not feasible 
for any $y \geq 0$ in the above LP relaxation of
the TSP. Also, $c^*$ violates the valid TSP inequality $x^*\cdot c\geq \eta$.

Suppose that the LP satisfying the Hamilton cycle property has the triangle
inequalities as constraints either explicitly or implicitly.  That is,
$$x_{i,j} \leq x_{i,k} + x_{k,j}$$ for all distinct triples $i,j,k$ of nodes (where
the order of the 2 endpoint nodes for an edge is arbitrary in the undirected
graph).
\begin{lemma}\label{splitoff}
  Suppose an LP satisfies the Hamilton cycle property and includes the
  triangle inequalities explicitly (or satisfies them at optimality).   
Then if $c'$ is certified by the LP to not be in the TSP polytope,
either the minimum cut of $c'$ is strictly less than $2$ or from $c'$ one can
construct a $c^*$ that is in the subtour relaxation of the TSP but is also
certified to not be in the TSP polytope.  
\end{lemma}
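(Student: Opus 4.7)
The plan is to proceed by cases on the minimum cut of $c'$. If this min cut is strictly less than $2$, the first alternative of the lemma holds immediately, so assume $c'(\delta(S)) \geq 2$ for every nonempty proper $S \subset V$. Then $c'$ already satisfies every subtour elimination inequality, and the remaining task is to repair the degree equalities $c(\delta(i)) = 2$ while preserving both subtour feasibility and the certificate that the point lies outside the TSP polytope.

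To repair degrees I would use fractional splitting off. Fix any $x^*$ feasible for LP~\eqref{anylp} with $c' \cdot x^* < \eta$; since $x^*$ depends only on the constraint system and not on the cost vector, any modification of $c'$ leaves $x^*$ feasible. While some node $k$ has $c(\delta(k)) > 2$, invoke Lovász's fractional splitting off theorem: since the current min cut is at least $2$, there exist distinct $i, j \neq k$ with $c_{ik}, c_{jk} > 0$ and some $\epsilon > 0$ such that the update $c_{ik} \mapsto c_{ik} - \epsilon$, $c_{jk} \mapsto c_{jk} - \epsilon$, $c_{ij} \mapsto c_{ij} + \epsilon$ preserves min-cut $\geq 2$, decreases $c(\delta(k))$ by $2\epsilon$, and leaves every other degree unchanged. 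The induced change in $c \cdot x^*$ is $\epsilon(x^*_{ij} - x^*_{ik} - x^*_{jk})$, which is nonpositive by the triangle inequality (enforced by the LP explicitly or at optimality, by hypothesis). Hence $c \cdot x^*$ is monotonically non-increasing through every split.

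Iterating these splits, each node's degree can be driven down to exactly $2$ while maintaining min-cut $\geq 2$, producing a final cost vector $c^*$ lying in the subtour polytope. By the monotonicity above, $c^* \cdot x^* \leq c' \cdot x^* < \eta$, so the same $x^*$ is still a certificate that $c^*$ is outside the TSP polytope. The main obstacle is the proper invocation of the splitting off theorem in the capacitated fractional setting: in particular, when the current min cut equals exactly $2$, an admissible split pair $(i,j)$ must be chosen so that the added mass on edge $ij$ does not destroy the tight cut. Standard splitting-off arguments show that such a pair always exists, typically by taking $i$ and $j$ on opposite sides of any tight cut through $k$, and that the iteration terminates with all degrees exactly $2$.
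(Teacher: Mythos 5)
Your overall strategy is the same as the paper's: iteratively apply Lov\'asz splitting off and use the triangle inequality to show that each split changes $c\cdot x^*$ by $\eps\bigl(x^*_{i,j}-x^*_{i,k}-x^*_{k,j}\bigr)\le 0$, so the certificate $c\cdot x^*<\eta$ survives to the final vector $c^*$. That part of your argument matches the paper's proof essentially line for line.

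However, there is a genuine gap in how you set up the splitting-off phase: you skip the normalization step. The paper first rescales $c'$ so that its minimum cut is \emph{exactly} $2$ (this is harmless for the certificate, since scaling by $2/\mathrm{mincut}(c')\le 1$ only decreases $c'\cdot x^*$). That normalization is what makes the standard splitting-off theorem applicable all the way down: Lov\'asz splitting off preserves the \emph{value} of the minimum global cut and is guaranteed to be available at $k$ precisely when $(\{k\},V\setminus\{k\})$ is not a minimum cut, i.e., when $c(\delta(k))>2$ once the min cut equals $2$. So the process terminates exactly when every degree equals $2$, yielding a subtour point. In your version the min cut of $c'$ may be some $\mu>2$, and a splitting off that preserves the min-cut value can never push a node's degree below $\mu$, since the singleton cut at that node is itself a cut; your iteration would stall with all degrees equal to $\mu$, which is not a subtour point. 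The variant you actually need --- split while only maintaining ``all cuts $\ge 2$'' rather than preserving the min-cut value, and still drive every degree to exactly $2$ --- is not the theorem you cite, and its termination is more delicate than you acknowledge (a greedy ``reduce $k$ to degree $2$, then move on'' can get stuck; e.g., on a triangle with all capacities $3$ one must interleave splits at different nodes). The fix is simply to normalize first, after which your argument goes through and coincides with the paper's. One further cosmetic difference: the paper closes by passing to a subtour \emph{extreme} point without loss of generality, which you omit but which the lemma statement does not require.
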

{\bf Proof: }
Suppose the minimizer for this LP \eqref{anylp} is $x^*$ and that the minimum
cut of $c'$ is at least $2$. 
Normalize $c'$ so that the minimum cut of $c'$ is exactly 2.
By assumption, $c'\cdot x^* < \eta$.  

Define
\emph{splitting off by $\eps$ at a node $k$ with nodes $i$ and $j$} to be the operation of taking nodes $i,j,k$ and
decreasing $c_{i,k}$ and $c_{j,k}$ by $\eps$ while increasing 
$c_{i,j}$ by
$\eps$ for some fixed $\eps$. In Lov\'{a}sz splitting off, this is done only
on nodes $i,j,k$ such that the minimum global cut stays the same after the
operation. 
It is known \cite{Lovaz} that if the cut $(\{k\}, V\setminus\{k\})$ is not a minimum cut,
then there exist nodes $i$ and $j$ and $\eps>0$, such that we can perform  Lov\'asz splitting
off by $\eps$ at $k$ with nodes $i$ and $j$. Thus, we iteratively perform Lov\'asz splitting
off at nodes $k$ such that $(\{k\}, V\setminus\{k\})$ is not a minimum cut, until
every node has fractional degree~2. During this process, we obtain
intermediate vectors $c''$ and eventually reach a final vector $c^*$ that has fractional degree 
$2$ at  every node.

Since Lov\'{a}sz splitting off keeps the minimum global cut the same, then $c^*$ is
a feasible subtour point.  Due to the triangle inequalities in \eqref{anylp}, 
at each step  $c^{\prime\prime}\cdot x^* < \eta$ because
$c^{\prime}\cdot x^* < \eta$ and the 
objective decreases at each step,
specifically, $c''\cdot x^* = c'\cdot x^* + \eps(x^*_{i,j} -(x^*_{i,k} + x^*_{k,j}))$. 
In the end $c^*\cdot x^* < \eta$, and
without 
loss of generality one can choose $c^*$ to be a subtour 
extreme point. \hfill \framebox \\

Now suppose the LP \eqref{anylp} is a relaxation of minimum global cut.
\begin{lemma}\label{win}
  Suppose an LP \eqref{anylp} is a relaxation of minimum global cut.  Take
  the case where it satisfies the Hamilton cycle property with $\eta =2$.    
  Then it leads to a TSP relaxation \eqref{tsprelax} that is at least as strong
  as the subtour relaxation of the TSP.
\end{lemma}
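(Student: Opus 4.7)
The plan is to prove containment of feasible regions: any cost vector $c$ feasible in the TSP relaxation \eqref{tsprelax} is also feasible in the subtour elimination relaxation. Once that is established, the inequality between optima follows immediately, giving the desired strength comparison.

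First I would unpack what feasibility in \eqref{tsprelax} means. A vector $c \geq 0$ is feasible there precisely when (a) $c(\delta(i)) = 2$ for every $i \in V$, and (b) there exists some $y \geq 0$ with $y \cdot A \leq c$ and $y \cdot b \geq \eta = 2$. The key observation is that constraint (b) is exactly a feasible dual solution, with objective at least $2$, for the LP \eqref{anylp} evaluated with cost vector $c$. By weak LP duality, the primal optimum of \eqref{anylp} with cost $c$ is therefore at least $2$.

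Next I would use the hypothesis that LP \eqref{anylp} is a relaxation of minimum global cut. This means that the incidence vector of any cut is feasible for \eqref{anylp}, so the global min-cut value of $c$ is an upper bound on nothing useful directly, but rather the \emph{LP optimum lower bounds} the optimum over integer cuts. Hence the minimum global cut of $c$ is at least the LP optimum, which is at least $2$. Therefore $c(\delta(S)) \geq 2$ for every nonempty proper $S \subset V$. Combined with the degree equalities inherited from (a) and the nonnegativity $c \geq 0$, this is exactly feasibility for the subtour elimination polytope.

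Finally, since the feasible region of \eqref{tsprelax} is contained in the subtour polytope, and the two relaxations share the same TSP objective $\hat c \cdot c$, the minimum over \eqref{tsprelax} is at least the minimum over the subtour relaxation, establishing that \eqref{tsprelax} is at least as strong. I do not anticipate a major obstacle here: the only subtle point is remembering that ``relaxation of minimum global cut'' translates into ``LP optimum $\le$ min-cut value'' in the correct direction, so that a lower bound on the LP value automatically transfers into a lower bound on every cut and in particular on every $c(\delta(S))$.
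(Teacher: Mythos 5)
Your proof is correct and uses the same two ingredients as the paper's own argument — weak duality ($y\cdot b \le c\cdot x^*$) combined with the relaxation hypothesis (LP optimum $\le$ min cut) — merely phrased directly (feasible in \eqref{tsprelax} $\Rightarrow$ min cut $\ge 2$ $\Rightarrow$ in the subtour polytope) rather than by contradiction as the paper does. This is essentially the same approach.
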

{\bf Proof: }
Suppose $c^*$ is feasible for \eqref{tsprelax} but that 
$c^*$ is not in the subtour polytope, i.e., the minimum global cut
of $c^*$ is less than $2$.  Plug $c^*$ into \eqref{anylp}.  Since LP~\eqref{anylp}
is a relaxation of minimum global cut, its minimizer $x^*$ satisfies
$c^*\cdot x^* \leq mincut(c^*) < 2$.  Thus, $y\cdot b \geq 2$ in
\eqref{tsprelax} is violated since 
$y\cdot b \leq c^*\cdot x^*$, by duality, and $c^*$ is not feasible for
\eqref{tsprelax} after all.  \hfill \framebox \\

Now suppose the LP satisfying the Hamilton cycle property is a relaxation
of minimum cut and also includes the triangle inequalities.  Then either this
LP is an exact minimum cut formulation or
it leads to a TSP relaxation that is strictly stronger than the subtour
relaxation of the TSP.
\begin{theorem}\label{winwin}
  Suppose an LP \eqref{anylp} is a relaxation of minimum global cut.
  Take the case where it
  satisfies the Hamilton cycle property
  with $\eta = 2$, and includes the triangle inequality
  constraints explicitly (or satisfies them at optimality).  
Then either this LP is an exact minimum cut formulation or
it leads to a TSP relaxation \eqref{tsprelax} that is strictly stronger than
the subtour relaxation of the TSP.
\end{theorem}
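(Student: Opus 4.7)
The plan is to combine Lemma~\ref{win} with Lemma~\ref{splitoff} to get the desired dichotomy. Lemma~\ref{win} already establishes that, under the hypotheses, the TSP relaxation \eqref{tsprelax} dominates the subtour relaxation. So the only thing left is to show that if the LP is not an exact minimum cut formulation, then this containment is strict, i.e., some subtour-feasible point is cut off by \eqref{tsprelax}.

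First I would suppose that LP~\eqref{anylp} is not an exact minimum cut formulation. By definition this means there exists a nonnegative cost vector $c'$ such that the minimum of LP~\eqref{anylp} at $c'$ is strictly less than $\mbox{mincut}(c')$. By scaling $c'$ by a positive constant (which scales both the LP value and the min cut by the same factor, and does not affect feasibility in \eqref{tsprelax} up to the same scaling), I may normalize so that $\mbox{mincut}(c')=2$. Let $x^*$ be an optimal solution to LP~\eqref{anylp} at $c'$, so $c'\cdot x^*<2=\eta$. Thus $c'$ is certified by LP~\eqref{anylp} not to be in the TSP polytope, in the sense introduced before Lemma~\ref{splitoff}.

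Next I would feed this $c'$ into Lemma~\ref{splitoff}. Since the LP includes the triangle inequalities (or satisfies them at optimality) and satisfies the Hamilton cycle property, Lemma~\ref{splitoff} gives two alternatives: either $\mbox{mincut}(c')<2$, which is excluded by the normalization, or one can perform Lov\'asz splitting-off steps to produce a cost vector $c^*$ that is in the subtour polytope and still certified not to be in the TSP polytope. As was observed following the definition of certification, any vector certified not to be in the TSP polytope is infeasible in \eqref{tsprelax}: from $c^*\cdot x^*<\eta$ and the dual inequalities $y\cdot A\leq c^*$, any candidate $y\geq 0$ satisfies $y\cdot b\leq c^*\cdot x^*<\eta$, violating $y\cdot b\geq \eta$.

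Putting this together: $c^*$ lies in the subtour polytope but not in the feasible region of \eqref{tsprelax}. Combined with Lemma~\ref{win}, this shows \eqref{tsprelax} is strictly contained in the subtour polytope, hence a strictly stronger relaxation of the TSP. The only real work is the reduction to a subtour-feasible certified point, which is precisely what Lemma~\ref{splitoff} delivers; once that lemma is in hand the rest is just duality bookkeeping, and I do not anticipate any serious obstacle beyond being careful that the normalization $\mbox{mincut}(c')=2$ is compatible with the hypothesis that $c'$ violates exactness (the violation is scale-invariant).
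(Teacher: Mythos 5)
Your proposal is correct and follows essentially the same route as the paper's proof: Lemma~\ref{win} gives containment in the subtour polytope, non-exactness yields a normalized $c'$ with $\mbox{mincut}(c')=2$ and $c'\cdot x^*<2$, the splitting-off argument of Lemma~\ref{splitoff} produces a subtour-feasible $c^*$ with $c^*\cdot x^*<2$, and the duality argument of Lemma~\ref{win} shows $c^*$ is cut off by \eqref{tsprelax}. Your explicit handling of the normalization and of why the first alternative of Lemma~\ref{splitoff} is excluded is slightly more careful than the paper's terse version, but the argument is the same.
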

{\bf Proof: }
Lemma~\ref{win} implies that the feasible region for LP~\eqref{tsprelax} is a subset
of the subtour polytope.  In the case where \eqref{anylp} is not an exact minimum cut
formulation, we show that it is a strict subset.

Suppose \eqref{anylp} is not an exact minimum cut formulation.  Then there
exists an objective $c'$ such that $mincut(c') = 2$ but $c'\cdot x^* < 2$ for
a minimizer $x^*$ of \eqref{anylp}.  By the reasoning of
Lemma~\ref{splitoff} there exists a $c^*$ in the subtour polytope such that
$c^*\cdot x^* < 2$.  By the reasoning of Lemma~\ref{win}, $c^*$ is then not
in the relaxation \eqref{tsprelax}. \hfill\framebox \\

We provide three examples of LPs satisfying the Hamilton
cycle property here and more examples in the next section.

Consider the following linear program where $s,t\in V$:

\begin{equation}\label{hampropst}
  \begin{array}{rllll}
    \mbox{minimize }& c\cdot x \\
    \mbox{subject to } \\
    x_{i,j} &\geq& h_j - h_i &\forall ij\in E \\
    x_{i,j} &\geq& h_i - h_j &\forall ij\in E \\
    h_t - h_s &\geq& 1  \\
  \end{array}
\end{equation}
If the integrality constraint $h\in \{0,1\}$ is added to \eqref{hampropst},
the resulting integer program is an IP formulation of minimum $s,t$ cut.
Hence \eqref{hampropst} is an LP relaxation of minimum $s,t$ cut.  It is
readily shown here to satisfy the Hamilton cycle property.  
\begin{theorem}\label{hampropstthm}
  Let $c$ be the incidence vector of a Hamilton cycle.  Then the minimum
  objective value for \eqref{hampropst} is 2.
\end{theorem}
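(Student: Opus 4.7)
The plan is to establish matching upper and lower bounds on the LP's optimal value when the cost vector $c$ is the incidence vector of a Hamilton cycle $C = v_1 v_2 \cdots v_n v_1$. The upper bound is straightforward from exhibiting an integer cut, and the nontrivial direction is showing the lower bound of $2$.

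For the upper bound $\eta \le 2$, I would exhibit an explicit feasible solution of value exactly $2$. The Hamilton cycle $C$, once two of its edges are removed, splits into exactly two $s,t$-paths. Choose any partition of the vertex set induced by removing two edges of $C$ so that $s$ and $t$ end up in different parts; set $h_i = 0$ for vertices on $s$'s side and $h_i = 1$ for vertices on $t$'s side, and set $x_{i,j} = |h_i - h_j|$ for every edge $ij \in E$. This satisfies the two difference constraints with equality, satisfies $h_t - h_s = 1$, and the only edges of $C$ contributing positively to $c \cdot x$ are the two cut edges, giving $c \cdot x = 2$.

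For the lower bound $\eta \ge 2$, fix any feasible $(x,h)$. The two difference constraints imply $x_{i,j} \ge |h_i - h_j|$ for every edge, and since the support of $c$ is precisely the edges of the Hamilton cycle $C$, we have
\begin{equation*}
c \cdot x \;=\; \sum_{k=1}^{n} x_{v_k, v_{k+1}} \;\ge\; \sum_{k=1}^{n} |h_{v_k} - h_{v_{k+1}}|,
\end{equation*}
where indices are taken mod $n$. The cycle $C$ decomposes into two edge-disjoint $s$-$t$ paths, say $P_1$ and $P_2$. By the triangle inequality for absolute values (i.e., telescoping along each path),
\begin{equation*}
\sum_{e \in P_\ell} |h_i - h_j| \;\ge\; |h_t - h_s| \;\ge\; 1 \qquad \text{for } \ell = 1,2.
\end{equation*}
Adding these two inequalities gives $c \cdot x \ge 2$, completing the lower bound.

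The main obstacle, such as it is, is recognizing that the support-of-$c$ structure (the Hamilton cycle) is exactly what lets us pass from the edgewise inequality $x_{i,j} \ge |h_i - h_j|$ to a global statement by splitting the cycle into two $s,t$-paths and telescoping. No triangle inequalities on $x$ itself are needed here, only the elementary triangle inequality for $|{\cdot}|$ applied to the $h$-values along each of the two arcs from $s$ to $t$.
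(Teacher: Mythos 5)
Your proof is correct and follows essentially the same route as the paper's: decompose the Hamilton cycle into two $s,t$-paths and telescope the difference constraints along each path to get a contribution of at least $1$ per path. The only cosmetic differences are that you telescope $|h_i-h_j|$ via the triangle inequality for absolute values where the paper telescopes the signed differences $h_{v_{i+1}}-h_{v_i}$ directly, and you spell out the explicit feasible solution attaining $2$, which the paper leaves as an easy remark.
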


{\bf Proof: }Let $H$ be an arbitrary Hamilton cycle and $c$ be its
incidence vector.  The cycle $H$ can be decomposed into two
vertex disjoint (except at ends) $s,t$ paths $P^1$
and $P^2$.  Consider $P^1$ first.  Let the $k+1$ vertices in order in the path
$P^1$ 
starting at $s=v_0$ be $v_0,v_1,\ldots,v_k = t$.  For $i\in \{0,\ldots,k-1\}$
we have $$x_{v_iv_{i+1}} \geq h_{v_{i+1}} - h_{v_i}.$$  Summing these up over the
  edges of the path $P^1$ yields a telescoping sum 
  $$x(E(P^1)) \geq \sum_{i=0}^{k-1}(h_{v_{i+1}} - h_{v_i})= h_t-h_s\geq 1.$$
  Therefore we have $$c\cdot x = x(E(P^1)) + x(E(P^2)) \geq 2.$$
  It is easy to show the bound of $2$ is attained. \hfill \framebox \\

  If $(x^*,h^*)$ is a minimizer of \eqref{hampropst} then without loss of
  generality in what follows, we may set $$x^*_{i,j} := |h^*_j - h^*_i|.$$

  \begin{lemma}\label{triopt}
    The feasible solutions to \eqref{hampropst} (where the $x$ variables
    attain their mimimum values given the values of the $h$ variables) satisfy
    the triangle inequalities in the $x$ variables.
  \end{lemma}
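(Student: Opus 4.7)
The plan is to invoke the standard triangle inequality for absolute values on the real line. Given a feasible $(x^*,h^*)$ for LP~\eqref{hampropst}, the two constraints $x^*_{i,j}\geq h^*_j-h^*_i$ and $x^*_{i,j}\geq h^*_i-h^*_j$ together are equivalent to $x^*_{i,j}\geq |h^*_j-h^*_i|$. When we further take $x^*$ to attain its minimum value given $h^*$, equality holds: $x^*_{i,j}=|h^*_j-h^*_i|$, which is the setting specified just before the lemma.

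Given this identity, I would verify the triangle inequality $x^*_{i,j}\leq x^*_{i,k}+x^*_{k,j}$ directly: substituting gives
\[
|h^*_j-h^*_i| \;\leq\; |h^*_k-h^*_i| + |h^*_j-h^*_k|,
\]
which is precisely the one-dimensional triangle inequality for real numbers (write $h^*_j-h^*_i=(h^*_j-h^*_k)+(h^*_k-h^*_i)$ and apply subadditivity of $|\cdot|$). Since this holds for every triple of distinct vertices $i,j,k$, the claim follows.

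There is essentially no obstacle here; the content of the lemma is that the LP constraints on $x$ and $h$ force $x^*$ to behave like a metric coming from an embedding of the vertices into $\mathbb{R}$, and any such ``induced'' metric from a one-dimensional embedding automatically satisfies triangle inequalities. The only care needed is to note that the reduction $x^*_{i,j}=|h^*_j-h^*_i|$ is without loss of generality for the purposes of the later arguments (since lowering $x^*$ to this value preserves feasibility of \eqref{hampropst} and cannot increase the objective, as $c\geq 0$), which justifies working with this explicit form rather than only the inequality $x^*_{i,j}\geq|h^*_j-h^*_i|$.
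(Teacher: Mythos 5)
Your proof is correct and matches the paper's intended argument: the paper states the lemma immediately after setting $x^*_{i,j} := |h^*_j - h^*_i|$ and leaves the verification implicit, which is exactly the one-dimensional triangle inequality $|h^*_j-h^*_i|\leq|h^*_k-h^*_i|+|h^*_j-h^*_k|$ that you spell out. Your additional remark that lowering $x^*$ to $|h^*_j-h^*_i|$ preserves feasibility and does not increase the objective (since $c\geq 0$) is a useful justification of the "without loss of generality" that the paper also asserts without proof.
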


  For what follows next, we introduce the concept of a {\it disjunctive
mathematical program}~\cite{balas1974disjunctive}.

  Consider the feasible regions $Q_t$ that arise from
\eqref{hampropst}
when binary integrality constraints are placed on $h$ and one sets $s:=n$ and
varies $t$ to be in the set
$\{1,\ldots,n-1\}$.  This is an IP formulation of minimum $s,t$ cut.  
Consider the disjunctive program of these $n-1$ minimum $s,t$ cut integer
programs.  The feasible region of this disjunctive program is
the convex hull of $\bigcup_t Q_t$.  
This disjunctive program can be seen to be an IP
formulation of the minimum global cut problem.  Now remove the binary
integrality constraints on the $h$ variables.  The resulting feasible regions
are $P_t$.  The feasible region $P_t$ is a relaxation of minimum $s,t$ cut for
each $t$.  
The disjunctive program of these $n-1$ LP relaxations has a feasible region
which is the convex hull of $\bigcup_t P_t$.  Hence, it is a relaxation of minimum global cut.  

\begin{theorem}
  Either $\bigcup_t P_t$ is an exact LP formulation of minimum global cut or
  it leads to a relaxation of the TSP that is strictly stronger than the
  subtour relaxation.
\end{theorem}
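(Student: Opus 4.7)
The plan is to apply Theorem~\ref{winwin} to the LP whose feasible region is the disjunctive polyhedron $\mathrm{conv}\!\left(\bigcup_t P_t\right)$. Via the Balas disjunctive lift, this polyhedron can be written in the standard form of~\eqref{anylp}, so it is eligible as input to the theorem. It therefore suffices to verify three hypotheses: that the disjunctive LP is a relaxation of minimum global cut, that it satisfies the Hamilton cycle property with $\eta = 2$, and that it includes the triangle inequalities either explicitly or at optimality.

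The relaxation property is already noted in the paragraph preceding the statement: each $P_t$ is an LP relaxation of minimum $s,t$-cut (with $s:=n$), and their disjunctive combination is a relaxation of the disjunctive IP over the $Q_t$'s, which is itself an exact formulation of minimum global cut. For the Hamilton cycle property, I use the standard fact that minimizing a linear objective over a Balas lift of a finite union of polyhedra reduces to minimizing over a single $P_t$ (concentrating all mass on the best disjunct). Hence, for the incidence vector $c$ of any Hamilton cycle,
\begin{equation*}
  \min_{(x,h)\in\mathrm{conv}(\bigcup_t P_t)} c\cdot x \;=\; \min_{t\in\{1,\dots,n-1\}}\;\min_{(x,h)\in P_t} c\cdot x \;=\; 2,
\end{equation*}
where the last equality is Theorem~\ref{hampropstthm} applied to each $P_t$. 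Thus $\eta = 2$ uniformly across Hamilton cycles.

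For the triangle inequality clause, the same reduction produces an optimal solution $(x^*,h^*)$ lying entirely in a single $P_t$, so by Lemma~\ref{triopt} we may replace its $x$-coordinates by $x^*_{i,j} = |h^*_j - h^*_i|$, which then satisfy the triangle inequality in the $x$ variables. With all three hypotheses in hand, Theorem~\ref{winwin} delivers the desired dichotomy: either $\mathrm{conv}(\bigcup_t P_t)$ is an exact LP formulation of minimum global cut, or the TSP relaxation~\eqref{tsprelax} it induces is strictly stronger than the subtour relaxation. The most delicate point is the ``at optimality'' clause for the triangle inequalities, but since the optimum of any linear objective localizes to a single $P_t$, this reduces to the per-disjunct statement already established by Lemma~\ref{triopt}; the auxiliary Balas lift variables are irrelevant because both the value $\eta$ and the triangle inequalities depend only on the edge variables $x$.
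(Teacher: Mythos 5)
Your proposal is correct and follows essentially the same route as the paper: verify the three hypotheses of Theorem~\ref{winwin} (relaxation of minimum global cut, Hamilton cycle property with $\eta=2$ via Theorem~\ref{hampropstthm}, and triangle inequalities at optimality via Lemma~\ref{triopt}) and invoke that theorem. You merely spell out the standard fact that a linear objective over the disjunctive hull localizes to a single $P_t$, which the paper leaves implicit.
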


{\bf Proof: }The disjunctive program above is a relaxation of minimum global
cut.  Its feasible region is the convex hull of $\bigcup_t P_t$. 
 It
 satisfies the Hamilton cycle property with $\eta = 2$ by Theorem~\ref{hampropstthm}.  It also
 includes the triangle
 inequalities as constraints when at optimality by Lemma~\ref{triopt}.
 Hence the result follows from Theorem~\ref{winwin}.  \hfill \framebox \\

 This theorem intuitively says that since \eqref{hampropst}
 satisfies the Hamilton cycle property, it is likely that \eqref{hampropst}
 is an exact LP formulation of minimum $s,t$ cut.  Indeed, this turns out to
 be the case.  

Next consider the following linear program.
\begin{equation}  \label{sodasmall}
\begin{array}{lrlll}
\mbox{minimize } & c\cdot x \\
\mbox{subject to} \\
&x_{i,j} + 2w_k &\leq& x_{k,i} + x_{k,j}&\forall k<i<j \\
& x_{n-1,n} &\geq& w_{n-1} \\ 
& \sum_{k=1}^{n-1}w_k &=& 1,\\
& x,w &\geq & 0.
\end{array}
\end{equation}

\begin{theorem}
  Let $c$ be the incidence vector of a Hamilton cycle.  Then the minimum
  objective value for \eqref{sodasmall} is $2$.
\end{theorem}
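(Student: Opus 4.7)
The plan is to show both $\le 2$ and $\ge 2$.

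For the upper bound, I would exhibit the feasible solution $w_1 = 1$, $w_k = 0$ for $k > 1$, and $x_{i,j} = \mathbf{1}[1 \in \{i,j\}]$, i.e., the incidence vector of the cut $(\{1\}, V\setminus\{1\})$. Every LP constraint is immediate: the triangle constraints with $k = 1$ read $0 + 2 \le 2$, those with $k > 1$ read $0 \le 0$, and $x_{n-1,n} = 0 \ge w_{n-1} = 0$. Since vertex $1$ has exactly two incident edges in any Hamilton cycle $H$, the objective equals $2$.

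For the lower bound, I plan to combine one triangle constraint per vertex with twice the constraint $x_{n-1,n} \ge w_{n-1}$. For each $k \in \{1,\ldots,n-2\}$, define $(i_k, j_k)$ with $i_k < j_k$ as the two vertices reached by walking along $H$ from $k$ in each cyclic direction until a vertex larger than $k$ is first encountered; equivalently, $\{i_k, j_k\}$ are the two vertices above $k$ that flank the maximal $H$-arc of vertices $\le k$ containing $k$, and they are distinct because $n - k \ge 2$. Summing $x_{k, i_k} + x_{k, j_k} - x_{i_k, j_k} \ge 2 w_k$ over $k = 1, \ldots, n-2$, adding $2(x_{n-1,n} - w_{n-1}) \ge 0$, and using $\sum_k w_k = 1$ yields
\[
\sum_{k=1}^{n-2}\bigl(x_{k, i_k} + x_{k, j_k} - x_{i_k, j_k}\bigr) + 2\, x_{n-1,n} \;\ge\; 2.
\]

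It then suffices to show this LHS is edge-wise dominated by the Hamilton-cycle cost $c$. The coefficient of edge $\{a,b\}$ with $a<b$ equals $P(a,b) - Q(a,b) + 2\,\mathbf{1}[(a,b)=(n-1,n)]$, where $P(a,b) = \mathbf{1}[a \le n-2 \text{ and } b \in \{i_a, j_a\}]$ and $Q(a,b) = |\{k < a : (i_k, j_k) = (a,b)\}|$. Letting $S_1, S_2$ denote the interiors of the two $H$-arcs between $a$ and $b$, one sees that $Q(a,b)$ counts the non-empty arcs $S_\ell$ with $\max S_\ell < a$.

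The main obstacle is then the case analysis by edge type. Hamilton edges $(a,b)\ne(n-1,n)$ have one empty arc (so $b$ is automatically in $\{i_a,j_a\}$, giving $P = 1$) and one non-empty arc whose interior contains a vertex of index $\ge n-1 > a$ (so $Q = 0$), yielding coefficient $1 = c_{a,b}$. The edge $(n-1,n)$ has $a = n-1$ so $P = 0$, while $Q$ equals $1$ if $(n-1,n) \in H$ and $2$ otherwise; in each case the $+2$ term produces the correct value of $c_{a,b}$. Non-Hamilton edges $(a,b)\ne(n-1,n)$ have $a\le n-2$ with both arcs non-empty, so $P = \mathbf{1}[Q\ge 1]$; moreover $Q = 2$ would require all $n - 2$ vertices of $V \setminus \{a,b\}$ to lie in $\{1, \ldots, a-1\}$, contradicting $a\le n-2$, so $Q \le 1$ and $P - Q = 0$. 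In every case the coefficient is at most $c_{a,b}$, which combined with the displayed inequality gives $c \cdot x \ge 2$.
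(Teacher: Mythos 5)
Your proof is correct and is essentially the paper's argument: the pairs $(i_k,j_k)$ you select are exactly the neighbors of $k$ in the shortcut cycle $c^{k-1}$ the paper builds by iteratively splicing out $1,\ldots,k-1$, so your aggregated inequality is the paper's telescoping chain $c\cdot x \ge c^1\cdot x + 2w_1 \ge \cdots \ge \sum_k 2w_k = 2$ unwound into a single dual combination, verified coefficient-by-coefficient instead of step-by-step. You additionally exhibit the feasible point attaining value $2$ (the cut isolating vertex $1$), which the paper leaves implicit.
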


{\bf Proof: }Let $c$ be the incidence vector of an arbitrary Hamilton cycle.
Let $c^0 := c$.  
For each node $k$, construct a Hamilton cycle vector $c^k$ on the set of nodes
$\{k+1,\ldots,n\}$ from a Hamilton cycle vector $c^{k-1}$ as follows.
Let $i_k$ and $j_k$ be neighbors of $k$ in the Hamilton cycle of $c^{k-1}$.
Remove these two edges incident to $k$ and replace them with the edge
$i_kj_k$.  The result of this is the Hamilton cycle vector $c^k$.  We now
derive a chain of inequalities for feasible solutions of \eqref{sodasmall}.
\begin{displaymath}
  \begin{array}{llll}
    c\cdot x &=& c^1\cdot x + x_{1i_{1}} + x_{1j_{1}} - x_{i_{1}j_{1}} \\
    &   \geq& c^1\cdot x + 2w_1 \\
    &\geq& c^2\cdot x +2w_1 +  2w_2 \\
    &\geq& \sum_k 2w_k = 2.
  \end{array}
\end{displaymath}
Thus the Hamilton cycle property is satisfied. \hfill \framebox \\

The LP \eqref{sodasmall} is a subset of the LP \eqref{soda}.  Hence
\eqref{sodasmall} is an LP relaxation of minimum cut.  However, Theorem
\ref{winwin} does not apply here unless the triangle inequalities are added to
\eqref{sodasmall}.  So whether \eqref{sodasmall} leads to a stronger
relaxation of the TSP than the subtour relaxation has not been resolved yet.

For proving the Hamilton cycle property of the $\alpha$-LP, we need the following lemma:
\begin{lemma} \label{lem:triangleInequalities}
Let $c$ be a cost vector, and let $G$ be the support graph of $c$. Consider we have the triangle constraints 
\begin{equation}
x_{i,j} \leq x_{i,k} + x_{k,j} \; \forall i,j,k \text{ s.t. } c_{k,j} > 0
\end{equation}
Let $P_{i,j}$ be a path of edges in $G$ between $i$ and $j$. Then $x(P_{i,j}) \geq x_{i,j}$ is a valid inequality.
\end{lemma}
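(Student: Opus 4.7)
The plan is to prove the lemma by induction on the length $\ell$ of the path $P_{i,j}$, peeling off one edge at a time from an endpoint and invoking a single triangle inequality at each step.

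For the base case $\ell = 1$, the path consists of the single edge $\{i,j\}$, so $x(P_{i,j}) = x_{i,j}$ trivially. For the inductive step, I would write the path as $i = v_0, v_1, \ldots, v_\ell = j$ with each consecutive pair $\{v_p, v_{p+1}\}$ an edge of $G$, that is, $c_{v_p, v_{p+1}} > 0$. The key observation is that in the hypothesized triangle constraint $x_{i,j} \le x_{i,k} + x_{k,j}$, the positivity requirement is placed on the cost of the edge $\{k,j\}$, i.e.\ on one specific pair among the two resulting edges. I would therefore apply the triangle inequality with $i := v_0$, $k := v_{\ell-1}$, $j := v_\ell$, for which the required condition $c_{v_{\ell-1}, v_\ell} > 0$ holds precisely because $\{v_{\ell-1}, v_\ell\}$ is the last edge of $P_{i,j}$ and thus lies in the support graph $G$.

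This yields $x_{v_0, v_\ell} \le x_{v_0, v_{\ell-1}} + x_{v_{\ell-1}, v_\ell}$. The subpath $v_0, v_1, \ldots, v_{\ell-1}$ is itself a path in $G$ of length $\ell - 1$ between $v_0$ and $v_{\ell-1}$, so by the inductive hypothesis $x_{v_0, v_{\ell-1}} \le \sum_{p=0}^{\ell-2} x_{v_p, v_{p+1}}$. Adding $x_{v_{\ell-1}, v_\ell}$ to both sides gives $x_{i,j} \le \sum_{p=0}^{\ell-1} x_{v_p, v_{p+1}} = x(P_{i,j})$, completing the induction.

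The only subtlety worth flagging is the asymmetric nature of the hypothesized triangle inequalities: they are only assumed when the apex is adjacent (in $G$) to one prescribed endpoint. The induction works because peeling the last edge of the path, rather than an arbitrary one, guarantees that this positivity condition is met at every step; trying to collapse an arbitrary interior edge, or to telescope in a symmetric fashion, would require triangle inequalities that the lemma does not provide. No other nontrivial obstacle arises.
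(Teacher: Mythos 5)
Your proposal is correct and follows essentially the same route as the paper: induction on the path length, peeling off the final edge $\{k,j\}$ so that the required triangle inequality $x_{i,j} \le x_{i,k} + x_{k,j}$ is available precisely because $c_{k,j} > 0$ for an edge of the support graph, then applying the inductive hypothesis to the remaining subpath. Your explicit remark about why the asymmetric positivity condition is satisfied at each step is a useful clarification that the paper leaves implicit, but it is the same argument.
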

\begin{proof}
Make the inductive assumption that the theorem holds for paths in $G$ of length $\ell$ or less. Let $P_{i,j}$ be a path in $G$ of length $\ell+1$ where $P_{i,j} =  P_{i,k} \cup \{(k,j)\}$. Then we can sum two inqualities to get the desired result:
\begin{equation}
\begin{array}{rl}
x(P_{i,k}) &\geq x_{i,k} \\
x_{i,k} + x_{k,j} &\geq x_{i,j}\\
\hline x(P_{i,j}) &\geq x_{i,j} \\
\end{array}
\end{equation}
\end{proof}

There is a symmetric optimal solution of the $\alpha$-LP when $c$ is the incidence vector of a Hamiltonian cycle. 
\begin{lemma} \label{lem:symopt}
Let $c^H$ be the incidence vector of Hamiltonian cycle $H$. There is an optimal LP solution $(x^{sym}, h^{sym})$ of the $\alpha$-LP with $x_{i,j}^{sym} = K$ for each edge $ij$ in $H$ and $h_i^{sym}= \frac{\alpha}{n}$ for each $i\in V$. 
\end{lemma}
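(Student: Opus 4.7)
The plan is a symmetrization argument that exploits the invariance of the $\alpha$-LP under vertex permutations together with the invariance of the cost vector $c^H$ under the dihedral automorphism group of $H$. Concretely, I would take an arbitrary optimal solution $(x^*,h^*)$ of the $\alpha$-LP, average it over the dihedral group $G\cong D_n$ acting on $H$ (the $2n$ rotations and reflections of the cycle), and show that the resulting solution has the required form.

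The first step is to verify that each $\sigma\in G$ sends optimal solutions to optimal solutions. Because constraints~\eqref{constalp:one}--\eqref{constalp:four} are symmetric in the vertex labels, the permuted pair $(\sigma x^*,\sigma h^*)$ is feasible. The only mildly delicate point is the triangle inequality~\eqref{constalp:four}, which is restricted to triples $(i,j,k)$ with $c_{jk}>0$: since $\sigma$ is an automorphism of $H$, it permutes the support of $c^H$ among itself, so the set of triples where~\eqref{constalp:four} is imposed is preserved, and feasibility of $(\sigma x^*,\sigma h^*)$ follows from feasibility of $(x^*,h^*)$. Moreover $c^H\cdot \sigma x^* = c^H\cdot x^*$ because $\sigma$ permutes the edges of $H$ and $c^H$ is the uniform indicator on those edges, so the objective is preserved.

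Next, set
\[
x^{sym} := \frac{1}{|G|}\sum_{\sigma\in G}\sigma x^*,\qquad h^{sym} := \frac{1}{|G|}\sum_{\sigma\in G}\sigma h^*.
\]
The pair $(x^{sym},h^{sym})$ is feasible by convexity of the feasible region and optimal by linearity of the objective. The rotation subgroup of $G$ acts transitively on $V$, so all $h_i^{sym}$ are equal to a common value $c$; it also acts transitively on the edges of $H$, so $x_{ij}^{sym}$ takes a common value $K$ on every such edge.

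Finally, I would nail down $c=\alpha/n$. Averaging preserves the sum $\sum_i h_i$, so $nc=\sum_i h_i^*\leq\alpha$, giving $c\leq\alpha/n$. Because $h$ appears in the $\alpha$-LP only through the cap $\sum_i h_i\leq\alpha$ and the upper bound $x_{ij}\leq h_i+h_j$, raising every $h_i^{sym}$ up to $\alpha/n$ can only relax constraints; hence we may assume $h_i^{sym}=\alpha/n$ without affecting feasibility or the objective. No step looks like a serious obstacle; the most care is needed in the first step, to ensure that the restricted support of the triangle inequalities~\eqref{constalp:four} is respected by the group action.
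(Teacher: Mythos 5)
Your proof is correct and follows essentially the same symmetrization-by-averaging argument as the paper, which averages an optimal solution over the $n$ cyclic rotations of the canonical Hamiltonian cycle (your use of the full dihedral group rather than just the rotation subgroup is an inessential variation, since rotations already act transitively on the vertices and on the edges of $H$). You are in fact slightly more careful than the paper on two points: you verify that the automorphisms of $H$ preserve the restricted support condition $c_{jk}>0$ in the triangle inequalities~\eqref{constalp:four}, and you observe that because the $\alpha$-LP only imposes $\sum_i h_i \le \alpha$ the averaged $h$ may come out strictly below $\alpha/n$ and must be raised to $\alpha/n$ (which only relaxes the constraints involving $h$) --- a step the paper's proof glosses over.
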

\begin{proof}
Without loss of generality, choose the canonical Hamiltonian cycle where $c_{i,j}^H=1$ for $j=i+1$ for each $i=1,2, \dots, n-1$ and $c_{n,1}^H=1$. Let $(x^*, h^*)$ be an optimal LP solution to the $\alpha$-LP. We wish to rotate our Hamiltonian cycle by $k$ units for $k=0,1,\dots, n-1$. Define $\pi^k(i)=1+(i+k-1 \pmod n) $ for each $i\in V$. Define  $x_{i,j}^k=x^*_{\pi^k(i), \pi^k(j)}$ and $h_i^k= h^*_{\pi^k(i)}$ for each $i\neq j\in V$. By symmetry, $(x^k, h^k)$ is also an optimal LP solution. Define $x^{sym} = \frac{1}{n}\sum_{k=0}^{n-1} x^k, h^{sym} =\frac{1}{n}\sum_{k=0}^{n-1} h^k$. This is a symmetric optimal solution as desired.
\end{proof}

Now we show that the $\alpha$-LP \eqref{first-lp} satisfies the Hamilton cycle
property.
\begin{theorem} \label{thm:alphaLPHamCycle}
  Let $c$ be the incidence vector of a Hamilton cycle.  Then the minimum
  objective value for 
  the old $\alpha$-LP \eqref{first-lp} is $2$.  
\end{theorem}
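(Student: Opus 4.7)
The plan is to reduce to the one-parameter symmetric solution supplied by Lemma~\ref{lem:symopt} and then show that the common value $K$ assigned to the cycle edges must be at least $2/n$, so the objective $nK$ is at least $2$.

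First I would invoke Lemma~\ref{lem:symopt} to fix an optimal solution $(x^{sym}, h^{sym})$ with $h_i = \alpha/n$ for every $i$ and with $x^\alpha_{i,j}$ depending only on the cycle distance $d(i,j)$ between $i$ and $j$ in $H$. Write $x_d$ for the common value at distance $d$, set $K := x_1$, and note that $c \cdot x^\alpha = nK$. The task reduces to showing $K \geq 2/n$.

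Next I would extract two families of inequalities. Summing the degree equality $x^\alpha(\delta(i)) = (n-2\alpha)h^\alpha_i + \alpha$ over $i \in V$ and using $\sum_i h_i = \alpha$ gives $\sum_{\{i,j\}} x^\alpha_{i,j} = \alpha(n-\alpha)$, which in the symmetric solution reads $\sum_d n_d\, x_d = \alpha(n-\alpha)$, where $n_d$ counts the pairs at cycle distance $d$ (equal to $n$ for $1 \leq d < n/2$ and to $n/2$ at $d = n/2$ when $n$ is even). On the other hand, Lemma~\ref{lem:triangleInequalities} applied to the shorter cycle path of length $d$ between the endpoints yields $x_d \leq dK$, while the max-cut-type constraint $x^\alpha_{i,j} \leq h_i + h_j$ gives $x_d \leq 2\alpha/n$; combined, $x_d \leq \min(dK,\, 2\alpha/n)$.

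The finishing step is to collapse these to a scalar inequality in $K$. Substituting into $\sum_d n_d\, x_d = \alpha(n-\alpha)$ produces $\alpha(n-\alpha) \leq \sum_d n_d \min(dK,\, 2\alpha/n)$. The right-hand side is non-decreasing in $K$, and a direct calculation shows that at $K = 2/n$ the crossover between the two branches of the minimum falls exactly at $d = \alpha$, so the weighted sum splits as $\sum_{d \leq \alpha} n_d \cdot dK + \sum_{d > \alpha} n_d \cdot 2\alpha/n$ and telescopes to precisely $\alpha(n-\alpha)$. Hence $K \geq 2/n$ and $c \cdot x^\alpha \geq 2$. Tightness is witnessed by the explicit assignment $h_i = \alpha/n$ and $x^\alpha_{i,j} = 2\min(d(i,j),\alpha)/n$, whose feasibility for each of the four constraint families can be checked directly (the triangle inequality reduces to $\min(d_2,\alpha) \leq \min(d_1,\alpha)+1$ when $|d_1-d_2|\leq 1$).

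The main obstacle is the closing combinatorial calculation, since $n_d$ behaves differently at the antipodal distance $d = n/2$ when $n$ is even, so the tight equality at $K = 2/n$ requires a brief case split on the parity of $n$. Once that case analysis is done, monotonicity of $\min(dK, 2\alpha/n)$ in $K$ finishes the argument.
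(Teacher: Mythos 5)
Your proposal is correct and follows essentially the same route as the paper's proof: symmetrize via Lemma~\ref{lem:symopt}, bound each $x^\alpha_{i,j}$ by $\min\bigl(d(i,j)K,\,2\alpha/n\bigr)$ using the cycle-path triangle inequalities (Lemma~\ref{lem:triangleInequalities}) and the max-cut constraints, and then use a degree identity to force $K\geq 2/n$. The only difference is cosmetic: the paper compares the resulting upper bound against the single degree equality $x(\delta(n))=(n-2\alpha)h_n+\alpha$ at one node, whereas you sum over all pairs and compare against $\sum_{\{i,j\}}x^\alpha_{i,j}=\alpha(n-\alpha)$ (which is exactly how the paper itself argues for the $\beta$-LP in the next section).
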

    {\bf Proof sketch: }Because of the permutation symmetry of \eqref{first-lp},
    we may make a remarkable assumption.  That is, we may 
assume that the Hamilton cycle is simply going from $1$ to $n$ in order and
back to $1$ again.  But there is even another amazing consequence of this
symmetry.  We may also take a symmetric optimal solution where
$x_e = K$ for all $e$ in the canonical Hamilton cycle and $h_i= \alpha/n$ for
all nodes $i$.  
Since $x$ is an optimal solution, and 2 is an upper bound for the minimum
objective value of the old $\alpha$-LP, we know that $K \leq 2/n$.   We will show that $K$ must equal $\frac{2}{n}$,  by exploiting the fact that $x$ is a feasible solution to the old $\alpha$-LP,
thus, yielding the theorem.

By Lemma~\ref{lem:triangleInequalities} and Lemma~\ref{lem:symopt},
$x_{n,j} \leq x(P_{n,j})=jK$ where $P_{n,j}$ is the shorter path from $n$ to $j$ in the Hamiltonian cycle for each $j\in \{1,\ldots,\alpha - 1\}$.  Similarly,
$x_{n,n-j} \leq jK$ for each $j\in \{1,\ldots,\alpha - 1\}$.  Finally,
by the maxcut constraints, we have $x_{n,j} \leq h_j + h_n = \frac{2\alpha}{n}$
for each $j\in \{\alpha,\ldots,n-\alpha\}$.

Combining these, we get
\begin{equation}
  \begin{array}{lrllll}
    \sum_{j=1}^{\alpha-1}x_{n,j} &\leq& \sum_{j=1}^{\alpha-1}jK
    &=&\frac{\alpha(\alpha-1)} {2}K\\
    \sum_{j=1}^{\alpha-1}x_{n,n-j} &\leq& \sum_{j=1}^{\alpha-1}jK
    &=&\frac{\alpha(\alpha-1)}{2}K\\
    \sum_{j=\alpha}^{n-\alpha}x_{n,j} &\leq&
    \sum_{j=\alpha}^{n-\alpha}\frac{2\alpha}{n}&=&(n-2\alpha+1)\frac{2\alpha}{n}\\
    \hline \\
    x(\delta(n)) &\leq& && (n-2\alpha+1)\frac{2\alpha}{n}+
    \alpha(\alpha-1)K
  \end{array}
\end{equation}
Now,
\begin{displaymath}
  \begin{array}{lrllll}
(n-2\alpha+1)\frac{2\alpha}{ n} &=& (n-2\alpha)\frac{\alpha}{n} 
    + (n-2\alpha)\frac{\alpha}{ n} + \frac{2\alpha}{ n}\\
    &=&
    (n-2\alpha)\frac{\alpha}{n} + \alpha -\frac{2\alpha^2}{ n} +\frac{2\alpha}{ n} \\
    &=& (n-2\alpha)\frac{\alpha}{n} +\alpha -\alpha(\alpha-1)\frac{2}{n}.
  \end{array}
\end{displaymath}
Hence we obtain
\begin{align}
 x(\delta(n)) & \leq (n-2\alpha)\frac{\alpha}{n} + \alpha +\alpha(\alpha-1)\left(K-\frac{2}{n}\right) \\
    &= (n-2\alpha)h_n + \alpha+\alpha(\alpha-1)\left(K-\frac{2}{n}\right).
\end{align}
Since $K \leq 2/n$, we obtain
\begin{equation}
x(\delta(n)) \leq (n-2\alpha)h_n + \alpha.
\end{equation}
However, the degree constraint from the old $\alpha$-LP~\eqref{first-lp} says that 
$x(\delta(n)) = (n-2\alpha)h_n + \alpha,$ thus, it must be the case that $K=2/n$,
and the theorem follows.
\hfill \framebox \\

We only used the triangle inequalities $x_{i,j} \leq x_{i,k} + x_{k,j}$ when $c_{j,k} > 0$ in the proof of Theorem~\ref{thm:alphaLPHamCycle}. Thus, the old $\alpha$-LP has size $O(mn)\times O(n^2)$ as we only need a subset of the triangle inqualities.
  
\section{Bounding the objective of $\alpha$-LP on Hamiltonian Cycles}

In this section we show that when the input graph is a single cycle on all of the nodes of the graph and all edges of the graph have unit (1) capacity then the new $\alpha$-LP gives an exact solution.  That is, we show the following theorem. 

\begin{theorem}
\label{thm:ham}
  Let $c$ be the incidence vector of a Hamilton cycle. Then, the $\alpha$-LP is an exact formulation for any $\alpha\leq n/2$.
\end{theorem}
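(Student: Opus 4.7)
First, I observe that the minimum $\alpha$-cut of a Hamilton cycle equals exactly $2$: since $H$ is $2$-edge-connected every nontrivial cut uses at least two $H$-edges, and any arc of $\alpha$ consecutive vertices realizes $2$. The new $\alpha$-LP is a relaxation of this integer problem---each $\{0,1\}$-incidence vector of a size-$\alpha$ cut satisfies \eqref{constalp:one}--\eqref{constalp:four} (in particular \eqref{constalp:two} holds at equality, and the max-cut-type and triangle inequalities are trivially valid for $\{0,1\}$-vectors)---so the LP value is at most $2$, and the task reduces to proving the matching lower bound.

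The plan for the lower bound is to re-run the calculation of Theorem~\ref{thm:alphaLPHamCycle}, substituting the aggregate sum constraint \eqref{constalp:two} in place of the per-node degree equation $x(\delta(i))=(n-2\alpha)h_i+\alpha$ of the old $\alpha$-LP \eqref{first-lp}, which is no longer available. Using Lemma~\ref{lem:symopt} I fix an optimal solution with $x^{sym}_{ij}=K$ on every $ij\in H$ and $h_i=\alpha/n$ for all $i$, so the LP objective equals $nK$ and the goal reduces to proving $K\geq 2/n$. By rotational symmetry, $x^{sym}_{ij}$ depends only on the cyclic distance $d(i,j)$; Lemma~\ref{lem:triangleInequalities} applied to the shorter arc of $H$ between $i$ and $j$ yields $x_{ij}\leq d(i,j)K$, and \eqref{constalp:three} yields $x_{ij}\leq h_i+h_j=2\alpha/n$. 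Splitting $\delta(i)$ into the same three bundles used in Theorem~\ref{thm:alphaLPHamCycle} (the two short arcs from $i$ of length up to $\alpha-1$, and the ``middle'' of $n-2\alpha+1$ far vertices) gives
\[
x(\delta(i)) \;\leq\; \alpha(\alpha-1)K + (n-2\alpha+1)\frac{2\alpha}{n}
\]
at every $i\in V$. Summing over $i$ and using $\sum_i x(\delta(i))=2\sum_{ij}x_{ij}$ together with \eqref{constalp:two}, the same algebraic simplification as in Theorem~\ref{thm:alphaLPHamCycle} collapses the resulting inequality to $n\alpha(\alpha-1)K\geq 2\alpha(\alpha-1)$, forcing $K\geq 2/n$ whenever $\alpha\geq 2$.

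The boundary case $\alpha=1$ falls outside this inequality but is even easier: symmetry gives $h_i=1/n$, so \eqref{constalp:three} forces $x_{ij}\leq 2/n$ on every pair, and then \eqref{constalp:two} combined with $\binom{n}{2}(2/n)=n-1$ forces every $x_{ij}$, including those on $H$, to equal $2/n$, so $K=2/n$ directly. The real obstacle is conceptual rather than computational: the new $\alpha$-LP has traded the strong per-node degree equality of the old LP for the single much weaker aggregate constraint \eqref{constalp:two}, and a priori one might fear this loss is fatal. The plan exploits the fact that Theorem~\ref{thm:alphaLPHamCycle}'s per-node upper bound on $x(\delta(i))$ depends only on the triangle and max-cut constraints (both of which survive in the new LP), and that summing this bound over all $n$ vertices converts the missing per-node degree equality into precisely the single constraint that does survive.
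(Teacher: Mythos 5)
Your proof is correct and follows essentially the same route as the paper: the paper proves Theorem~\ref{thm:ham} by specializing its $(\beta,X)$-LP theorem to $\lambda=1$, and the computation there is exactly your argument (symmetrize, bound each $x_{ij}$ by $\min(Kd(i,j),\,2\alpha/n)$ via the triangle and max-cut constraints, sum, and compare with the aggregate constraint \eqref{constalp:two} to force $K\geq 2/n$); your per-node-then-sum bookkeeping is algebraically identical to the paper's direct sum over all edges. The only substantive differences are cosmetic: you handle the degenerate case $\alpha=1$ explicitly (the paper's choice of threshold $d\leq\beta_1$ rather than $d\leq\alpha-1$ sidesteps it), and the paper additionally proves tightness of the right-hand side $X$, which your direct specialization does not need.
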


To show this for the $\alpha$-LP, we will in fact show that the Hamiltonian cycle property holds for a relaxation, called the $\beta$-LP. We relax the $\alpha$-LP by expressing $\alpha$ as the convex combination of two integers. This gives the $\beta$-LP which has the following form:

\begin{align}
\nonumber \min \quad & c \cdot x^\beta &\\
\mbox{subject to} \\
\sum_{ \{i,j\}\in E} x_{i,j}^\beta &\geq \lambda\beta_1(n-\beta_1)+(1-\lambda)\beta_2(n-\beta_2) & \label{eq:sumX}\\
\sum_{v\in V} h_v^\beta &\leq \lambda\beta_1+(1-\lambda)\beta_2& \label{eq:sumH}\\
x_{i,j}^\beta &\leq h_i^\beta+h_j^\beta &\forall \{i,j\}\in E \label{eq:maxcut}\\
x_{i,j}^\beta &\leq  x_{i,k}^\beta+x_{j,k}^\beta & \forall i,j,k\in V : c_{jk} > 0\label{eq:tri}\\
0&\leq x^\beta, h^\beta  &\label{eq:nonneg}\\
\nonumber 0&\leq \lambda \leq 1&
\end{align}

We will use  $\beta_2=\beta_1+1$ and $\alpha =\lambda \beta_1+(1-\lambda) \beta_2$ for this section. 

\subsection{Removing the $h_i$ variables}
The first thing we will consider about this linear program is how to remove the $h_i$ from the problem as they appear in very few constraints. We will project out these variables to better understand the linear program. Constraint~\eqref{eq:sumH} in the linear program guarantees that the sum of all the $h_i$ is $\alpha$. The other constraints involving $h_i$ are constraints~\eqref{eq:maxcut} of the form $x_{i,j} \leq h_i+h_j$. Let us consider that $x_{i,j}$ are fixed, then we can minimize the sum of the $h_i$ via the following linear program:

\begin{align*}
\min & \sum_{i \in V} h_i &\\
\mbox{subject to} \\
h_i+h_j &\geq x_{i,j} &\quad \forall ij\in E\\
h_i & \geq 0 & \quad \forall i \in V
\end{align*}

If we take the dual of this linear program, we get the following linear program:
\begin{align*}
\max & \sum_{ij \in E} y_{i,j}x_{i,j} &\\
\mbox{subject to} \\
\sum_{j\in V} y_{i,j} \leq 1 & \quad \forall i \in V\\
y_{i,j} & \geq 0 & \quad \forall ij \in E
\end{align*}
This is just a linear programming formulation for a 2-factor scaled down by a factor of 2. Therefore, an optimal solution to this linear program is half the cost of maximum 2-factor where $x_{i,j}$ are the edge costs. So, we could project out the $h_i$ variables, by replacing them with the constraints that every 2-factor on the $x_{i,j}$ has weight at most $2\alpha$. 

\subsection{Hamiltonian cycle property}
The Hamiltonian cycle property is a key property required of the $\beta$-LP if we are to relate it to the TSP polytope. While the $\beta$-LP is a strict relaxation of the $\alpha$-LP, it turns out the Hamiltonian cycle property still holds. Not only does the Hamiltonian cycle property hold for the $\beta$-LP, but if the $\sum_{ij} x_{i,j}$ were relaxed any further then the Hamiltonian cycle property no longer holds. 

\begin{theorem}
Consider a modified $\beta$-LP with a fixed $\lambda, \beta_1, \beta_2=\beta_1+1$ and $X$ being the right hand side of constraint~\eqref{eq:sumX} and everything else the same. Call this modified LP the $(\beta, X)$-LP. The Hamiltonian cycle property holds for this LP if and only if 
\[X\geq \lambda \beta_1(n-\beta_1) +(1-\lambda)\beta_2(n-\beta_2).\]
\end{theorem}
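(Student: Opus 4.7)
The plan is to exploit the cyclic symmetry of the $(\beta,X)$-LP to reduce the canonical-Hamilton-cycle problem to a one-parameter optimization in the common value $K$ of $x$ on cycle edges, and then to identify $X_\beta:=\lambda\beta_1(n-\beta_1)+(1-\lambda)\beta_2(n-\beta_2)$ as the critical threshold at which $K$ is forced below $2/n$. Since the $(\beta,X)$-LP is invariant under the cyclic permutations $\pi^k(i)=1+(i+k-1\bmod n)$ of the canonical cycle, averaging an optimal solution over all $n$ such permutations (exactly as in Lemma~\ref{lem:symopt}) yields an optimal $(x^*,h^*)$ with $h_i^*=\alpha/n$ for every $i$ and $x_{i,j}^*$ depending only on the cyclic distance $d(i,j)=\min(|i-j|,\,n-|i-j|)$. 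Writing $K:=x_{i,i+1}^*$, the objective equals $nK$, so the iff reduces to showing that $K\geq 2/n$ precisely when $X\geq X_\beta$.

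Next I would extract pointwise upper bounds. Lemma~\ref{lem:triangleInequalities} applied to the shorter cycle-path from $i$ to $j$ (all of whose edges have positive cost, so constraint~\eqref{eq:tri} is available there) gives $x_{i,j}^*\leq d(i,j)K$; constraint~\eqref{eq:maxcut} gives $x_{i,j}^*\leq 2\alpha/n$. Letting $n_d$ denote the number of unordered pairs at cyclic distance $d$ ($n_d=n$ for $1\leq d<n/2$ and $n_{n/2}=n/2$), summing both bounds over all pairs and invoking~\eqref{eq:sumX} gives
\[
\Phi(K)\;:=\;\sum_{d=1}^{\lfloor n/2\rfloor}n_d\,\min\!\bigl(dK,\ 2\alpha/n\bigr)\;\geq\;X.
\]
The crux is the identity $\Phi(2/n)=X_\beta$. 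At $K=2/n$ one has $\min(dK,2\alpha/n)=(2/n)\min(d,\alpha)$, which equals $2d/n$ for $d\leq\beta_1$ and $2\alpha/n$ for $d\geq\beta_2=\beta_1+1$. Splitting the sum accordingly and using $\sum_{d=1}^{\beta_1}n\,d=n\beta_1(\beta_1+1)/2$ and $\sum_{d=\beta_2}^{\lfloor n/2\rfloor}n_d=(n/2)(n-2\beta_1-1)$ yields $\Phi(2/n)=\beta_1(\beta_1+1)+\alpha(n-2\beta_1-1)$; substituting $\lambda=\beta_2-\alpha$ and $\beta_2=\beta_1+1$ and expanding verifies that this equals $\lambda\beta_1(n-\beta_1)+(1-\lambda)(\beta_1+1)(n-\beta_1-1)=X_\beta$. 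This algebraic identity is the main obstacle, but it is routine.

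Both directions now follow from the monotonicity of $\Phi$ together with $\Phi(2/n)=X_\beta$. In the forward direction ($X\geq X_\beta$), feasibility forces $\Phi(K)\geq X\geq\Phi(2/n)$, and since $\Phi$ is strictly increasing near $K=2/n$ (the $d=1$ term alone contributes $nK$), we conclude $K\geq 2/n$, so the objective is at least $2$; at $X=X_\beta$ this lower bound is attained by $\lambda(x^{(1)},h^{(1)})+(1-\lambda)(x^{(2)},h^{(2)})$, where $(x^{(i)},h^{(i)})$ is the $\{0,1\}$-indicator of an arbitrary minimum $\beta_i$-cut of the Hamilton cycle, confirming the Hamilton cycle property with $\eta=2$. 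Conversely, if $X<X_\beta$, continuity and strict monotonicity let us pick $K_0\in(0,\,2/n)$ with $\Phi(K_0)=X$; setting $h_i:=\alpha/n$ and $x_{i,j}:=\min(d(i,j)K_0,\,2\alpha/n)$ produces a feasible point (max-cut holds by construction; the triangle inequality holds because $d\mapsto\min(dK_0,\,2\alpha/n)$ is subadditive along unit steps of $d$) whose objective $nK_0<2$ shows that the Hamilton-cycle minimum has dropped below the natural constant $\eta=2$, so the Hamilton cycle property fails.
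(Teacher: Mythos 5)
Your proposal is correct and follows essentially the same route as the paper's proof: average over cyclic rotations to get a distance-symmetric optimal solution, bound each $x_{i,j}$ by $\min(d\,K,\,2\alpha/n)$ via the triangle and max-cut constraints, sum and compare against $X$ to force $K\geq 2/n$, and for the converse exhibit an explicit feasible point of objective $nK_0<2$. The only cosmetic differences are your packaging of the summed bound as a monotone function $\Phi(K)$ with $\Phi(2/n)=X_\beta$ and your use of a convex combination of cut indicators to witness attainment of the value $2$.
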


\begin{proof}
Without loss of generality, consider the Hamiltonian cycle is $1,2,\dots n$. Let $(x,h)$ be a feasible solution to the $(\beta, X)$-LP. By symmetry, we can consider all the cyclic rotations of $1,2,\dots n$ on $(x,h)$ and these are all feasible. Similarly, by taking the convex combination of these rotations where each is weighted by $\frac{1}{n}$ then we also arrive at a feasible solution. Call this modified solution $(x', h')$. 

In the modified solution, $h'_i = \frac{\alpha}{n}$ for all $i\in V$ by symmetry. Similarly, $x'_{i,j}=x'_{i+1, j+1}$ for all $i,j\in V$ by symmetry. Now let $K=x'_{1,2}$.  We will compute the maximum possible sum of the $x'_{i,j}$. Let $d_n(i,j)$ denote the distance between nodes $i$ and $j$ in the Hamiltonian cycle $1,2,\dots n$. By the triangle inequality, given by constraint~\eqref{eq:tri}, we have the following bound:
\[ x'_{i,j} \leq K d_n(i,j)\]
By the max cut inequality, given by constraint~\eqref{eq:maxcut}, we have the following upper bound:
\[ x'_{i,j} \leq \frac{2\alpha}{n}\]
When $d_n(i,j) \leq \beta_1$ we will use the first upper bound and otherwise we will use the second upper bound. Combining these upper bounds we get:
\begin{align*}
\sum_{ij\in E} x'_{i,j} &= \sum_{ij \in E, d_n(i,j) \leq \beta_1} x'_{i,j} +\sum_{ij \in E, d_n(i,j) > \beta_1}x'_{i,j}\\
&\leq \sum_{ij \in E, d_n(i,j) \leq \beta_1} Kd_n(i,j) +\sum_{ij \in E, d_n(i,j) > \beta_1}\frac{2\alpha}{n}\\
&\leq \sum_{i=1}^{\beta_1} nKi + \left(\frac{n(n-1)}{2}-\beta_1 n \right) \frac{2\alpha}{n}\\
&\leq nK\frac{\beta_1(\beta_1+1)}{2} + 2\alpha \left(\frac{(n-1)}{2}-\beta_1  \right)\\
\end{align*}
We know that the sum of the $x'_{i,j}$ is $X$. Let $X=\lambda \beta_1(n-\beta_1) + (1-\lambda)\beta_2(n-\beta_2)$. We can first simplify this expression to:
\begin{align*}
\sum_{ij} x'_{i,j} &= \lambda \beta_1(n-\beta_1) + (1-\lambda)\beta_2(n-\beta_2)\\
&= \lambda \beta_1(n-\beta_1) + (1-\lambda)\left(\beta_1(n-\beta_1)+(n-2\beta_1-1)\right)\\
&= \beta_1(n-\beta_1)+(1-\lambda)(n-2\beta_1-1)\\
\end{align*}
Now we will use this to find a bound on $K$:
\begin{align*}
\beta_1(n-\beta_1)+(1-\lambda)(n-2\beta_1-1) &\leq nK\frac{\beta_1(\beta_1+1)}{2} + 2\alpha \left(\frac{(n-1)}{2}-\beta_1  \right)\\
\beta_1(n-\beta_1)+(1-\lambda)(n-2\beta_1-1) &\leq \frac{n}{2}K \beta_1(\beta_1+1)+ (\beta_1+1-\lambda)(n-2\beta_1-1)\\
\beta_1(n-\beta_1) &\leq \frac{n}{2} K\beta_1(\beta_1+1)+ \beta_1(n-2\beta_1-1)\\
\beta_1(\beta_1+1) &\leq \frac{n}{2} K\beta_1(\beta_1+1)\\
\frac{2}{n} &\leq K
\end{align*}
Therefore, the Hamiltonian cycle property holds when $X$ is at least $\lambda\beta_1(n-\beta_1)+(1-\lambda)\beta_2(n-\beta_2)$. 

Now let $X=\lambda \beta_1(n-\beta_1) + (1-\lambda)\beta_2(n-\beta_2)-\epsilon$, and consider  the following assignment of $h_i, x_{i,j}$. 
\begin{align*}
h_i &= \frac{\alpha}{n} \\
x_{i,j} &= \begin{cases}
(\frac{2}{n}-\frac{2\epsilon}{n\beta_1(\beta_1+1)})d_n(i,j) & \text{ if } d_n(i,j) \leq \beta_1\\
\frac{2\alpha}{n} & \text{ otherwise}
\end{cases}
\end{align*}
This is a valid solution for the $(\beta, X)$-LP with the given $X$. This solution gives value less than 2 for $c\cdot x$ where $c$ is the Hamiltonian cycle $1,2,\dots, n$. Therefore, we have proven that $X=\lambda \beta_1(n-\beta_1) + (1-\lambda)\beta_2(n-\beta_2)$ is the smallest value of $X$ such that $(\beta, X)$ satisfies the Hamiltonian cycle property. 
\end{proof}
This tells us that the $\beta$-LP does have the Hamiltonian cycle property, and is the most relaxed such LP given constraints~\eqref{eq:maxcut}, \eqref{eq:tri}, \eqref{eq:nonneg} still hold. 
\begin{corollary}
The Hamiltonian cycle property holds for the $\beta$-LP when $\beta_2=\beta_1+1$. 
\end{corollary}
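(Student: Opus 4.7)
The proof proposal here is essentially a one-step invocation of the preceding theorem, so the plan is short. The $\beta$-LP is by definition the $(\beta, X)$-LP in which the right-hand side of constraint~\eqref{eq:sumX} is set to
\[
X \;=\; \lambda\beta_1(n-\beta_1) + (1-\lambda)\beta_2(n-\beta_2).
\]
The theorem just proved characterizes exactly those $X$ for which the Hamiltonian cycle property holds: namely, $X \geq \lambda\beta_1(n-\beta_1) + (1-\lambda)\beta_2(n-\beta_2)$. So my plan is simply to observe that the defining $X$ of the $\beta$-LP meets this inequality with equality, and therefore the theorem applies directly.

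Concretely, I would write: fix any $\beta_1$ with $\beta_2 = \beta_1 + 1$ and any $\lambda \in [0,1]$, and let $c$ be the incidence vector of a Hamilton cycle. The $\beta$-LP is the instance of the $(\beta, X)$-LP with $X = \lambda\beta_1(n-\beta_1) + (1-\lambda)\beta_2(n-\beta_2)$. Since this choice of $X$ satisfies $X \geq \lambda\beta_1(n-\beta_1) + (1-\lambda)\beta_2(n-\beta_2)$, the forward direction of the theorem gives that the minimum objective value of the LP on $c$ is at least $2$. The symmetric solution constructed in the proof of Lemma~\ref{lem:symopt} (with $K = 2/n$ on the cycle edges and $h_i = \alpha/n$) shows this bound is attained, so the minimum is exactly the constant $\eta = 2$, which is the Hamiltonian cycle property.

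There is essentially no obstacle here; the only thing worth double-checking is that the defining right-hand side of the $\beta$-LP literally matches the threshold in the theorem, which it does by construction. Hence the corollary follows immediately, and there is nothing further to prove beyond stating that the $\beta$-LP is the ``tight'' case of the characterization and invoking the already-established ``if'' direction.
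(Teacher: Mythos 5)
Your proposal is correct and is essentially the paper's own argument: the paper likewise just notes that the right-hand side of constraint~\eqref{eq:sumX} in the $\beta$-LP equals the threshold $\lambda\beta_1(n-\beta_1)+(1-\lambda)\beta_2(n-\beta_2)$ and invokes the preceding theorem. Your added remark that the bound of $2$ is attained by the symmetric solution is a harmless (and slightly more careful) supplement.
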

\begin{proof}
The sum of the $x_{i,j}$ is always $\lambda \beta_1(n-\beta_1)+(1-\lambda)\beta_2(n-\beta_2)$ in the $\beta$-LP. Therefore, the previous theorem holds. 
\end{proof}

\begin{proof}[Proof of ~\ref{thm:ham}]
By making $\lambda$ a constant equal to $1$ in the $\beta$-LP, then we get back the $\alpha$-LP exactly. So, the Hamiltonian cycle property holds for the $\alpha$-LP as well with value $2$. 

The solution $h^\alpha_i= \alpha/n$ and $x^\alpha_{i,j} = \min(\frac{2|j-i|\mod n}{n}, \frac{2\alpha}{n})$ provides a solution with cost exactly $2$. 
\end{proof}

Interestingly, this last property of the $x^{\alpha}_{i,j}$ motivates the use of the single minimum cut constraint in the $\alpha$-LP. In particular, 
\[ \alpha(n-\alpha) = \sum_{ij\in E} \min(\frac{2|j-i|\mod n}{n}, \frac{2\alpha}{n})\].


 \section{Using the $\alpha$-LP to strengthen a  TSP relaxation}
 
For any $\alpha $ the $\alpha$-LP achieves the optimal solution on the Hamiltonian cycle when $\lambda = 0$ as stated in Theorem \ref{thm:ham}. This leads to an interesting relaxation of the TSP using the technique relating 
compact separation to compact optimization \cite{Martin91,CarrL04}.  Let 
$c^{obj}$ be the TSP objective function and $t$ be the TSP edge variables.  Recall 
the $\alpha$-LP's objective is $c\cdot x$.    
Consider the dual of the $\alpha$-LP with $t := c$ for the columns in $A$  
corresponding to the $x$ variables and $t := 0$ otherwise, which is 
\begin{displaymath}
\begin{array}{lll}
\mbox{maximize }& y^{\alpha}\cdot b^{\alpha}\\
\mbox{subject to} \\
&y^{\alpha}\cdot A^{\alpha} \leq t \\
&y^{\alpha} \geq 0.
\end{array}
\end{displaymath}
Note that $c^{obj}$ and $t$ are set to $0$ on the columns corresponding to the 
$h$ variables.  Then the $\alpha$-TSP relaxation is     
\begin{displaymath}
\begin{array}{lll}
\mbox{minimize }&c^{obj}\cdot t \\
\mbox{subject to} \\
&y^{\alpha}\cdot A^{\alpha} \leq t \\
&y^{\alpha}\cdot b^{\alpha} \geq 2 \\
&y^{\alpha} \geq 0.
\end{array}
\end{displaymath}
How this compares to the subtour relaxation of the TSP is of interest.  
Compact subtour relaxations can be found in \cite{Arthanari,Carr96,Martin91}. 
We have the following surprising result:
\begin{lemma}
If the $\alpha$-LP gives an answer of strictly less than $2$ for a 
subtour extreme point $c^*$, then $\alpha$-TSP can be made stronger than the 
subtour relaxation (after adding a compact subtour relaxation to it).
\end{lemma}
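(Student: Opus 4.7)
The plan is to apply strong LP duality between the $\alpha$-LP and the $\alpha$-TSP relaxation. Fixing $t := c^*$ in the system $y^\alpha \cdot A^\alpha \le t$, $y^\alpha \ge 0$, duality says that the maximum of $y^\alpha \cdot b^\alpha$ over this system equals the minimum of $c^* \cdot x^\alpha$ subject to $A^\alpha z^\alpha \ge b^\alpha$, $z^\alpha \ge 0$. This primal is exactly the $\alpha$-LP with cost vector $c^*$: because $t$ is set to zero on the $h^\alpha$-columns, only the $x$-part of $z^\alpha$ contributes to the primal objective. By hypothesis this $\alpha$-LP value is strictly less than $2$, so every feasible $y^\alpha$ has $y^\alpha \cdot b^\alpha < 2$ and hence violates the constraint $y^\alpha \cdot b^\alpha \ge 2$. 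Consequently $c^*$ is not in the projection of the $\alpha$-TSP feasible region onto $t$-space.

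Next I verify that the combined relaxation (compact subtour plus $\alpha$-TSP) is itself a valid TSP relaxation, so that strict containment in the subtour polytope is genuine strengthening and not overshoot. The subtour part trivially admits every Hamilton tour; for the $\alpha$-TSP part, if $t$ is the incidence vector of a Hamilton cycle, then Theorem~\ref{thm:ham} yields $\alpha$-LP value exactly $2$, and strong duality produces a $y^\alpha \ge 0$ with $y^\alpha \cdot A^\alpha \le t$ and $y^\alpha \cdot b^\alpha = 2$, certifying feasibility. Every Hamilton tour thus survives. Since $c^*$ is a subtour extreme point, it lies in the compact subtour polytope; by the previous paragraph it is cut off by the $\alpha$-TSP constraints; hence it lies in the subtour polytope but not in the combined relaxation. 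The combined relaxation is therefore a proper subset of the subtour polytope, i.e., strictly stronger.

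The main obstacle I anticipate is the book-keeping with the $h^\alpha$-columns in the duality transformation: one must check carefully that zeroing $c^{obj}$ and $t$ on those columns gives a clean correspondence between $y^\alpha$-feasibility for the $\alpha$-TSP system at $t = c^*$ and the ordinary $\alpha$-LP value at cost $c^*$, rather than some truncated variant. Once this correspondence is stated precisely, the argument reduces to one invocation of strong LP duality together with Theorem~\ref{thm:ham}, with the hypothesis ``$\alpha$-LP value $< 2$'' doing exactly the work of separating $c^*$ from the $\alpha$-TSP relaxation.
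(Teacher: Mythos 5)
Your proposal is correct and follows essentially the same route as the paper: the paper's proof is a one-line appeal to weak duality, observing that $y^{\alpha}\cdot b^{\alpha} \leq t\cdot x^* = c^*\cdot x^* < 2$ contradicts the constraint $y^{\alpha}\cdot b^{\alpha}\geq 2$, so $c^*$ is infeasible for the $\alpha$-TSP relaxation while remaining in the subtour polytope. Your additional check that Hamilton tours survive (via Theorem~\ref{thm:ham} and strong duality) is a sound elaboration of a point the paper leaves implicit, but the separating argument itself is identical.
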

{\bf Proof: }One can see that $t := c^*$ and $t\cdot x^* < 2$ for a feasible 
$\alpha$-LP solution $x^*$ is contradicted by weak duality and 
$y^{\alpha}\cdot b^{\alpha} \geq 2$.  Therefore, $c^*$ is an infeasible value 
for $t$.  \hfill \framebox \\

The companion theorem that uses Lov\'{a}sz splitting \cite{Lovaz} is: 
\begin{lemma}
If the $\alpha$-LP gives an answer of strictly less than the global minimum cut for some 
cost function $c^{\prime}$, then the $\alpha$-LP gives an answer of strictly less 
than $2$ for some subtour extreme point $c^*$.
\end{lemma}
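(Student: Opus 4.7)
The plan is to mirror the splitting-off argument of Lemma~\ref{splitoff}. Fix a cost vector $c'$ witnessing that the $\alpha$-LP is not an exact mincut formulation and rescale so that $\text{mincut}(c') = 2$; the $\alpha$-LP then gives value strictly less than~$2$ at $c'$. Let $x^*$ be an optimal $\alpha$-LP solution at $c'$, so $c' \cdot x^* < 2$. The goal is to transform $c'$ into a subtour extreme point $c^*$ by Lov\'asz splittings while maintaining $c \cdot x^* < 2$ throughout.

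I will repeatedly choose a node $k$ whose singleton cut is not a global min cut and invoke Lov\'asz's theorem to produce neighbors $i,j$ of $k$ and some $\epsilon > 0$ such that splitting off by $\epsilon$ at $k$ with $i,j$ preserves the min cut at $2$. This process terminates with a cost vector $c^\dagger$ in which every node has fractional degree exactly~$2$, so $c^\dagger$ lies in the subtour polytope. Each splitting changes the inner product with $x^*$ by $\epsilon(x^*_{i,j} - x^*_{k,i} - x^*_{k,j})$, which is nonpositive provided $x^*$ satisfies the triangle inequality on the triple $(i,j,k)$. Assuming this holds at every step, one obtains $c^\dagger \cdot x^* < 2$. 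Writing $c^\dagger$ as a convex combination $\sum_r \lambda_r c^*_r$ of subtour extreme points, some summand $c^*$ must satisfy $c^* \cdot x^* < 2$; so long as $x^*$ is also feasible for the $\alpha$-LP at $c^*$, this yields $\alpha$-LP$(c^*) \leq c^* \cdot x^* < 2$.

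The chief obstacle is that the $\alpha$-LP's triangle inequality $x_{i,j} \leq x_{k,i} + x_{k,j}$ is a constraint only when $c_{j,k} > 0$, while Lov\'asz splittings can activate edges that were previously at zero capacity. Consequently, at an intermediate step the argument may invoke an inequality that does not appear in the original $\alpha$-LP at $c'$, for which the optimal $x^*$ need not hold. My plan is to arrange $x^*$ to satisfy \emph{all} triangle inequalities at the outset: one clean route is to take $x^*$ optimal for the old $\alpha$-LP~\eqref{first-lp} at $c'$, whose constraint set is independent of $c$ and includes every triangle inequality, so $x^*$ remains feasible along the entire Lov\'asz trajectory and for every subtour extreme point $c^*$ encountered at the end. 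What then has to be checked---and is the delicate point---is that the old $\alpha$-LP at $c'$ also has objective value strictly below~$2$ whenever the new one does; failing a direct comparison, one can try to modify an optimal new-LP solution by lowering each $x^*_{i,j}$ with $c'_{i,j}=0$ to enforce any violated triangle inequality, verifying that such reductions preserve the non-negativity, maximum-cut, and aggregate-sum constraints of the $\alpha$-LP.
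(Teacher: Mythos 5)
Your approach is exactly the one the paper intends: the paper states this lemma with no written proof beyond calling it the ``companion'' result that uses Lov\'asz splitting, i.e., it is meant to follow by the reasoning of Lemma~\ref{splitoff} applied to the $\alpha$-LP, which is precisely your first two paragraphs (normalize so $mincut(c')=2$, split off until every node has fractional degree $2$, track $c\cdot x^*$ via the triangle inequality, extract an extreme point from the convex combination). Where you go beyond the paper is in flagging the weak point, and your diagnosis is correct: constraint~\eqref{constalp:four} imposes $x_{i,j}\leq x_{k,i}+x_{k,j}$ only for triples with $c_{j,k}>0$, so the feasible region of the $\alpha$-LP depends on the support of $c$. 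Splitting off can move edges into the support, so an intermediate step may invoke a triangle inequality that the minimizer $x^*$ at $c'$ was never required to satisfy, and $x^*$ need not even be feasible for the $\alpha$-LP at the terminal $c^*$, so $c^*\cdot x^*<2$ does not by itself bound the $\alpha$-LP's value at $c^*$. Note that Lemma~\ref{splitoff} explicitly hypothesizes that \emph{all} triangle inequalities hold (explicitly or at optimality), and the paper never verifies that hypothesis for the $\alpha$-LP; this gap belongs to the paper as much as to your write-up.

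Neither of your proposed repairs closes it as stated. Passing to the old $\alpha$-LP~\eqref{first-lp} fails in exactly the direction you suspect: it is a tighter LP, so its value at $c'$ can be at least $2$ even when the new $\alpha$-LP's value is below $2$, and the lemma's hypothesis concerns the new LP. The second repair (lowering $x^*_{i,j}$ on edges outside the support to restore violated triangle inequalities, e.g.\ taking the metric closure) preserves nonnegativity, constraint~\eqref{constalp:three}, constraint~\eqref{constalp:one} and the objective value, but it only decreases $\sum_{ij}x^*_{i,j}$ and therefore may violate the aggregate constraint~\eqref{constalp:two}; you would need a separate argument that enough mass survives, and none is apparent. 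The clean resolution is to prove the lemma for the variant of the $\alpha$-LP carrying all $O(n^3)$ triangle inequalities, whose feasible region is independent of $c$: there your argument is complete, at the cost of the larger constraint set the paper is trying to avoid. For the support-restricted $\alpha$-LP as written, the statement should be regarded as unproven both by your proposal and by the paper.
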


Combining these two lemmas gives us the following theorem: 
\begin{theorem}
Either $\alpha$-LP always gives an answer of at least the minimum global cut 
or we have produced, using $\alpha$-TSP, a compact relaxation stronger than 
the subtour relaxation.
\end{theorem}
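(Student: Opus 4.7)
The plan is to derive this theorem as a direct consequence of the two preceding lemmas, combined by contrapositive reasoning on the first disjunct. I would begin by supposing the first disjunct fails, i.e., that the $\alpha$-LP does \emph{not} always give an answer of at least the minimum global cut. Unpacking this negation yields the existence of a cost function $c'$ such that the optimal value of the $\alpha$-LP on $c'$ is strictly less than $\text{mincut}(c')$.

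Next I would invoke the second lemma with this $c'$. The lemma produces a subtour extreme point $c^*$ on which the $\alpha$-LP returns a value strictly less than $2$. Conceptually, the job of this step is to transfer the ``bad'' instance from a general cost vector to one that lies inside the subtour polytope; this is exactly the setting needed to leverage the first lemma, since only then can one argue that $\alpha$-TSP cuts off a point that subtour does not.

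Finally I would apply the first lemma to this $c^*$: since the $\alpha$-LP value at $c^*$ is strictly below $2$, weak duality forces $c^*$ to be infeasible for $\alpha$-TSP (once a compact subtour relaxation is added). Because $c^*$ lies in the subtour polytope by construction, this yields the strict containment that the second disjunct asserts.

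The main obstacle is really only conceptual rather than technical: checking that the two lemmas compose with no hidden assumption mismatch. In particular, one should confirm that the subtour extreme point produced by the second lemma is of exactly the type on which the first lemma can be applied, and that the ``adding a compact subtour relaxation'' step in the first lemma does not affect the infeasibility conclusion. Once these are verified, the proof is a one-line chaining argument, and the substantive content lives inside the two lemmas (especially the Lovász splitting-off argument behind the second one).
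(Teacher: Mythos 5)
Your proposal is correct and matches the paper exactly: the paper gives no written proof for this theorem beyond the remark that it follows by ``combining these two lemmas,'' and your contrapositive chaining --- negate the first disjunct, apply the splitting-off lemma to obtain a subtour extreme point $c^*$ with $\alpha$-LP value below $2$, then apply the duality lemma to conclude $c^*$ is cut off by $\alpha$-TSP --- is precisely that combination. Your closing caveats about the normalization inside the splitting-off lemma and the role of the added compact subtour relaxation are the right things to check, and they are indeed handled inside the lemmas rather than in this theorem's proof.
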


\section{Minimum Spanning Trees}
In this subsection, we will examine the cost of the minimum spanning tree (MST) of a point $x$ which has the Hamiltonian cycle property to help determine if $x$ dominates a convex combination of cuts. We say a point satisfies the Hamiltonian cycle property if $c\cdot x\geq 2$ for any cost function $c$ which is $1$ on the edges of Hamiltonian cycle and 0 elsewhere and $x$ is a metric. The cost of the minimum Hamiltonian cycle is bounded above by twice the MST, so the value of the MST on $x$ is at least $1$. For certain values of the MST we can determine that $x$ dominates a convex combination of cuts. In particular, when the value of the MST is exactly 1, or at least 2. 

\begin{lemma}
Let the Hamiltonian cycle property hold for $x$ and let the cost of the minimum spanning tree be exactly 1, then $x$ is exactly a convex combination of cuts. 
\end{lemma}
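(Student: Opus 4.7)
The plan is to show that $x$ equals the tree metric $d_T$ of any minimum spanning tree $T$, where $d_T(i,j) := \sum_{e \in P_T(i,j)} x_e$ and $P_T(i,j)$ denotes the $i$-$j$ tree path. Once this is established, the identity $x_{ij} = d_T(i,j) = \sum_{e \in T} x_e \, \chi^{C_e}_{ij}$, with $C_e$ the fundamental cut of $T$ at $e$, immediately expresses $x$ as a convex combination of cut incidence vectors: the nonnegative weights $x_e$ sum to $1$, the MST cost.

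Fixing such a $T$, the inequality $x_{ij} \leq d_T(i,j)$ is immediate from the triangle inequality applied along the tree path. For the matching lower bound, call a Hamiltonian cycle $H$ \emph{$T$-tight} if every fundamental cut $C_e$ is crossed by $H$ exactly twice. For such $H$ one computes $\sum_{ij \in H} d_T(i,j) = 2 \sum_{e \in T} x_e = 2$, and combined with $x \leq d_T$ pointwise this forces $\sum_{ij \in H} x_{ij} \leq 2$. The Hamiltonian cycle property then provides the matching $\geq 2$, so equality holds throughout and $x_{ij} = d_T(i,j)$ on every edge of $H$.

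It remains to show that every pair $(u,v)$ lies on some $T$-tight Hamiltonian cycle; I will construct one explicitly. Let $P = (u = p_0, p_1, \ldots, p_k = v)$ be the tree path and let $T_0, \ldots, T_k$, with $p_i \in T_i$, be the subtrees obtained by removing the edges of $P$. Place the blocks in cyclic order $T_0, T_1, \ldots, T_k$; inside $T_0$ take the DFS preorder rooted at $u$ (so $u$ begins the $T_0$-block), inside $T_k$ take the \emph{reverse} of the DFS preorder rooted at $v$ (so $v$ ends the $T_k$-block), and inside each intermediate $T_i$ take any DFS preorder rooted at $p_i$. The cyclic wrap-around then makes $v$ adjacent to $u$. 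The path cuts $C_{(p_{i-1},p_i)}$ are tight because each prefix $T_0 \cup \cdots \cup T_{i-1}$ is a cyclic arc by construction, and each internal cut of $T_i$ is tight because the subtrees of $T_i$ avoiding $p_i$ appear as consecutive subsequences in any DFS preorder, a property preserved under reversal.

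The main obstacle is this combinatorial construction: a naive DFS preorder of the whole tree places $u$ and $v$ cyclically adjacent only in special cases (for example when one is a leaf of its block), so the critical trick is to combine a DFS preorder in $T_0$ with its reversal in $T_k$, which positions $u$ and $v$ at opposite ends of the concatenated block ordering regardless of their degrees. Applying the tight-cycle conclusion to each pair $(u,v)$ then yields $x = d_T$, and the decomposition $x = \sum_{e \in T} x_e\, \chi^{C_e}$ is the required convex combination of cuts.
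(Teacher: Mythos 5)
Your proof is correct, and it shares the paper's overall skeleton: both arguments reduce the lemma to showing that $x$ coincides with the tree metric of the MST $T$, and both then read off the convex combination from the fundamental cuts $C_e$ weighted by $x_e$ (which sum to $1$, the MST cost). Where you genuinely diverge is the mechanism for the lower bound $x_{ij} \geq x(P_T(i,j))$. The paper forms the Eulerian multigraph $2T + ij - P_T(i,j)$, whose $x$-cost is $2 + x_{ij} - x(P_T(i,j))$, shortcuts an Euler tour of it to a Hamiltonian cycle (the metric property ensures shortcutting does not increase the cost), and invokes the Hamiltonian cycle property to force $2 + x_{ij} - x(P_T(i,j)) \geq 2$; crucially, that cycle need not contain the edge $ij$, so no ordering argument is needed. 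You instead construct, for each pair $(u,v)$, an explicit Hamiltonian cycle \emph{through} $uv$ that crosses every fundamental cut exactly twice, compute that its $d_T$-cost is exactly $2$, and squeeze termwise equality out of $x \leq d_T$ together with the Hamiltonian cycle property. Your block decomposition with the reversed preorder in $T_k$ does what you claim --- the path cuts are tight because each prefix of blocks is a contiguous cyclic arc, and the internal cuts are tight because DFS preorder keeps subtrees contiguous and reversal preserves contiguity --- and this extra combinatorics is exactly the price of requiring $uv$ to lie on the cycle. What your route buys is a fully explicit, self-contained argument that avoids the paper's slightly informal ``use $2T+ij-P_T(i,j)$ as a Hamiltonian cycle'' step and certifies tightness on all edges of each constructed cycle simultaneously; what the paper's route buys is brevity, since one multigraph manipulation replaces your entire ordering construction.
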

\begin{proof}
Let $x$ have a minimum spanning tree of cost exactly $1$, and call this MST $T$. Let $P_T(i,j)$ be the path in $T$ from $i$ to $j$.  Now consider there is any edge $ij$. Originally, we can use $2T$ to make a Hamiltonian cycle of value $2$ (via shortcutting). Now $T'=2T+ij-P_T(i,j)$ is Eulerian and connected, therefore we can use this as a Hamiltonian cycle. We know the Hamiltonian cycle property holds for $x$, it must be the case $x_{i,j} \geq x(P_T(i,j))$. Otherwise the Hamiltonian cycle $T'$ gives a cost function $c_{T'}$ where $c_{T'}x <2$. We also, know that $x$ is a metric though which gives $x_{i,j} \leq x(P_T(i,j))$. Therefore, we know that $x_{i,j} = x(P_T(i,j))$ for all $i,j$ when the MST is exactly $1$. Let $S_{i,j}$ be the cut induced by the components of $T-ij$. We can write $x$ as a convex combination of cuts by writing it as $\sum_{ij\in T} x_{i,j} \delta(S_{i,j})$. 
\end{proof}

\begin{lemma}
Let the Hamiltonian cycle property hold for $x$ and let the cost of the minimum spanning tree be at least 2, then $x$ dominates a convex combination of cuts. 
\end{lemma}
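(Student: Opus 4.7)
My plan is to argue by contradiction using LP duality. By a Farkas argument applied to the system $\{\lambda \geq 0 : \sum_S \lambda_S \delta(S) \leq x,\ \sum_S \lambda_S = 1\}$, the statement ``$x$ dominates a convex combination of cuts'' is equivalent to asserting that
\[
\mu \cdot x \;\geq\; \min_{\emptyset \neq S \subsetneq V} \mu(\delta(S)) \qquad \text{for every } \mu \geq 0.
\]
Suppose otherwise: fix some $\mu \geq 0$ with $\mu \cdot x < \phi$, where $\phi$ is the minimum $\mu$-cut (clearly $\phi > 0$, else $\mu\cdot x \geq 0 = \phi$).

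The next step is to normalize $\mu$ via Lov\'asz splitting off. At any node $k$ with $\mu(\delta(k)) > \phi$, Lov\'asz's theorem supplies nodes $i,j$ and some $\eps > 0$ such that moving $\eps$ from $\mu_{ki}$ and $\mu_{kj}$ onto $\mu_{ij}$ preserves every minimum global cut. Metricity of $x$ (which is part of the Hamiltonian cycle property) gives $x_{ij} \leq x_{ki} + x_{kj}$, so this swap cannot increase $\mu\cdot x$; and since it only touches edges incident to $k$, it leaves the $\mu$-degrees of all other nodes intact. Iterating over the nodes one at a time I obtain a vector $\mu'$ with $\mu'(\delta(v)) = \phi$ for every $v \in V$, with the same minimum cut $\phi$, and with $\mu' \cdot x \leq \mu \cdot x < \phi$.

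The vector $\mu'$ is now a fractional $\phi$-regular, $\phi$-edge-connected graph. For any partition $\pi = \{V_1, \dots, V_p\}$ of $V$, summing $\mu'(\delta(V_i)) \geq \phi$ over the $p$ parts and dividing by $2$ (each cross edge is counted twice) gives
\[
\mu'(E(\pi)) \;\geq\; \tfrac{\phi p}{2} \;\geq\; \tfrac{\phi}{2}(p-1).
\]
The fractional Nash-Williams/Edmonds partition theorem then yields spanning trees $T_1,\dots,T_m$ and non-negative coefficients $\alpha_1,\dots,\alpha_m$ summing to $\phi/2$ with $\mu' \geq \sum_i \alpha_i \chi_{T_i}$. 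Taking the inner product with $x$ and using the hypothesis MST$(x) \geq 2$,
\[
\mu' \cdot x \;\geq\; \sum_i \alpha_i\, x(T_i) \;\geq\; \Bigl(\sum_i \alpha_i\Bigr)\mathrm{MST}(x) \;\geq\; \tfrac{\phi}{2}\cdot 2 \;=\; \phi,
\]
contradicting $\mu' \cdot x < \phi$. Hence $x$ must dominate a convex combination of cuts.

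The step I expect to require the most care is the iterative splitting off: one must verify that (a) whenever a node still has $\mu$-degree $> \phi$ Lov\'asz's theorem applies, and (b) a split-off at $k$ never disturbs the degrees of previously normalized nodes, so a node-by-node sweep really is well-defined. Once this bookkeeping is in place, the chain ``splitting off $\rightarrow$ partition inequality $\rightarrow$ fractional Nash-Williams $\rightarrow$ MST$\,\geq 2$'' is short, and the Hamiltonian cycle hypothesis enters only through the metricity of $x$ that it bundles in; the MST$\,\geq 2$ bound carries the rest.
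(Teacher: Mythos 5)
Your proof is correct, but it takes a genuinely different route from the paper's. The paper argues on the primal side of the cut-covering LP: it writes down the undirected cut relaxation $\min\{x\cdot y : y(\delta(S))\geq 1,\ y\geq 0\}$ of the spanning tree problem, invokes its known integrality gap of $2$ to conclude from $\mathrm{MST}(x)\geq 2$ that the LP value is at least $1$, and then reads the desired fractional cut packing directly off an optimal dual solution $z$ with $\sum_S z_S\geq 1$ and $\sum_{S: ij\in\delta(S)} z_S\leq x_{ij}$. You instead pass to the Farkas certificate (a $\mu\geq 0$ with $\mu\cdot x<\phi:=\min_S\mu(\delta(S))$) and refute it with a fractional Tutte--Nash-Williams tree packing: since every part of a partition into $p$ classes has $\mu$-cut at least $\phi$, the partition inequality $\mu(E(\pi))\geq \phi p/2\geq (\phi/2)(p-1)$ holds, so $\mu$ packs spanning trees to total weight $\phi/2$ and $\mu\cdot x\geq(\phi/2)\mathrm{MST}(x)\geq\phi$, a contradiction. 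These are two faces of the same ``factor $2$ between edge-connectivity and tree packing'' phenomenon, but yours is self-contained modulo the fractional Nash-Williams theorem, whereas the paper leans on the Steiner-tree integrality-gap result as a black box. One genuine simplification you should make: the entire splitting-off stage is superfluous. The partition inequality above already follows for the \emph{original} $\mu$ from $\mu(\delta(V_i))\geq\phi$ for each part, so you never need to equalize degrees, and with that step removed your argument no longer uses metricity of $x$ at all---only $x\geq 0$ and $\mathrm{MST}(x)\geq 2$, which matches the paper's proof in using none of the Hamiltonian cycle property beyond the MST hypothesis. (If you do keep the splitting-off sweep, the bookkeeping you flag is fine: a split at $k$ changes only $d(k)$, by $-2\eps$, so previously normalized nodes are undisturbed.)
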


\begin{proof}
We know that the natural Steiner tree LP has a integrality gap of at most 2. Consider the special case when all nodes are terminals, and we get the following LP and its dual
\begin{displaymath}
\begin{array}{llll}
\mbox{minimize}&x\cdot y \\
\mbox{subject to} 
& y(\delta(S)) \geq 1 & \forall S \subseteq V, S \neq \emptyset, V\\
& y_{i,j} \geq 0 & \forall ij \in E(V)
\end{array}
\end{displaymath}
\begin{displaymath}
\begin{array}{llll}
\mbox{maximize}&1\cdot z \\
\mbox{subject to} 
& \sum_{S:ij \in \delta(S)} z_S \leq x_{i,j} & \forall ij \in E(V) \\
& y_{S} \geq 0 & \forall S \subseteq V, S \neq \emptyset, V\\
\end{array}
\end{displaymath}
Given the minimum spanning tree has cost 2, then there is a solution to the dual linear program of cost at least 1. The $y_S$ variables give a decomposition of cuts that $x$ dominates. Therefore, $x$ dominates a convex combination of cuts as desired. 

\end{proof}

\section{Removing Knowledge of $\alpha$}

Recall that $z^{\alpha} = [x^{\alpha},h^{\alpha}]$ in $\alpha$-LP.  
We can now use the ideas of lift-and-project \cite{balas1994mixed} to obtain our first 
minimum global cut LP formulation that has symmetricity: 
\[
\begin{array}{lllll}
\mbox{minimize} & c\cdot x \\
\mbox{subject to} \\
&A^{\alpha}z^{\alpha} \geq b^{\alpha}\lambda_{\alpha} \\
&z^{\alpha} \geq 0 \\
&z = \sum_{\alpha}z^{\alpha}& \\
&\sum_{\alpha}\lambda_{\alpha} = 1, \, \lambda \geq 0.
\end{array}
\]
This $O(mn^2) \times O(n^3)$ LP is at least one order of magnitude bigger than the 
minimum cut LP ($O(n^3) \times O(n^2)$) in  SODA~\cite{CarrKLNP07, carr2009compacting}, which did not
have symmetricity.  We can however shed almost all of this 
extra size if we sacrifice exactness by a small amount.  

\subsection{The $\gamma$-LP}

Previously, we chose $\beta_2= \beta_1+1$. This worked well, but we can relax this further. Let us consider the $\beta$-LP with $\beta_2=2\beta_1$ and call this the $\gamma$-LP. We will show that any solution to the $\gamma$-LP is not too far from a solution to the $\alpha$-LP. With the following theorem, we can solve the $\gamma$-LP with all the power of $2$ values for $\beta_1$, and get an approximate solution to the $\alpha$-LP.

\begin{theorem}
\label{thm:alpobj}
Consider any solution $(x, h, \lambda)$ to the $\gamma$-LP with $\beta_2=2\beta_1$. Then $\frac{9}{8}(x, h)$ is a solution to the $\alpha$-LP for some value of $\alpha$. 
\end{theorem}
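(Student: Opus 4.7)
The plan is to choose a suitable value of $\alpha$ and then verify that the scaled point $\tfrac{9}{8}(x,h)$ satisfies each constraint of the $\alpha$-LP. With $\beta_2 = 2\beta_1$ in the $\gamma$-LP, let $\bar\alpha := \lambda\beta_1 + (1-\lambda)\beta_2 = (2-\lambda)\beta_1$ be the convex combination on the right of~\eqref{eq:sumH}, and set $\alpha := \tfrac{9}{8}\bar\alpha$ (assuming $\beta_2$ is small enough that $\alpha \leq n/2$, which holds whenever $\beta_2 \leq 4n/9$; the boundary case is handled analogously by taking $\alpha := n/2$).

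Most constraints of the $\alpha$-LP are immediate. The max-cut constraints~\eqref{constalp:three}, the triangle inequalities~\eqref{constalp:four}, and the nonnegativity bounds are all homogeneous of degree one in $(x,h)$, so scaling by $\tfrac{9}{8}$ preserves them. The $h$-sum constraint~\eqref{constalp:one} follows immediately from~\eqref{eq:sumH}: $\tfrac{9}{8}\sum_i h_i \leq \tfrac{9}{8}\bar\alpha = \alpha$. All of the real work goes into the edge-sum constraint~\eqref{constalp:two}, namely $\tfrac{9}{8}\sum_{ij} x_{i,j} \geq \alpha(n-\alpha)$.

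For~\eqref{constalp:two} I would use the lower bound~\eqref{eq:sumX}. After substituting $\beta_2 = 2\beta_1$ and $\alpha = \tfrac{9}{8}\bar\alpha$ and expanding both sides, the terms linear in $n$ (which both equal $(2-\lambda)n\beta_1$) cancel, and what remains after dividing by $\beta_1^2$ is the single-variable algebraic inequality
\[
\tfrac{9}{8}(2-\lambda)^2 \;\geq\; 4 - 3\lambda, \qquad \lambda \in [0,1].
\]
Verifying this inequality is the main obstacle, and is precisely where the constant $9/8$ is forced. Setting $f(\lambda) := \tfrac{9}{8}(2-\lambda)^2 - (4 - 3\lambda)$, convexity of $f$ together with $f'(\lambda)=0$ yields the unique critical point $\lambda = 2/3$, where $f(2/3) = \tfrac{9}{8}\cdot\tfrac{16}{9} - 2 = 0$. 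Hence $f \geq 0$ on $[0,1]$, with equality precisely at $\lambda = 2/3$, so $9/8$ is the smallest constant for which this choice of $\alpha$ works.
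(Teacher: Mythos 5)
Your argument is essentially the paper's own proof: the same choice $\alpha = \tfrac{9}{8}(2-\lambda)\beta_1$, the same observation that the homogeneous constraints scale for free, and the same reduction of \eqref{constalp:two} to a degenerate quadratic that vanishes at $\lambda = 2/3$ --- the paper writes the identical inequality as $\tfrac{9}{8}H^2 - 3\beta_1 H + 2\beta_1^2 \geq 0$ in the variable $H = \sum_i h_i = (2-\lambda)\beta_1$, minimized (at value $0$) at $H = 4\beta_1/3$. Your version is marginally cleaner in that you use the right-hand sides of \eqref{eq:sumH} and \eqref{eq:sumX} as one-sided bounds rather than tacitly treating those constraints as tight, as the paper does.

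The one place your write-up does not hold up is the boundary case. When $(2-\lambda)\beta_1 > \tfrac{4}{9}n$ you cannot ``take $\alpha := n/2$ analogously'' while keeping the factor $\tfrac{9}{8}$: constraint \eqref{constalp:one} would then require $\tfrac{9}{8}\sum_i h_i \leq n/2$, which can fail (e.g.\ $\beta_1 = n/4$, $\lambda = 0$, $\sum_i h_i = n/2$ gives $\tfrac{9}{8}\sum_i h_i = \tfrac{9n}{16} > \tfrac{n}{2}$). The paper handles this regime by switching to the smaller scale factor $n/(2H) \leq 9/8$ with $\alpha = n/2$, so the point shown feasible there is not $\tfrac{9}{8}(x,h)$ but a smaller multiple of $(x,h)$. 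The theorem statement itself glosses over this, but a complete proof needs to either make that case explicit or restrict to $\alpha \leq n/2$ with the adjusted scaling.
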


\begin{proof}
We will be considering a relaxation of the $\alpha$-LP where we will drop the constraint that variables be bounded above by $1$.

First, it is clear that Constraints~\eqref{constalp:three},~\eqref{constalp:four}, and the non-negativity constraints are satisfied by $\frac{9}{8}(x, h)$ as the constraints are homogeneous and were satisfied by $(x,h, \lambda)$ in the $\gamma$-LP.  So, we only need to show that the remaining constraints hold. 

Let $X=\sum_{ij\in E}x_{i,j}$ and $H=\sum_{i=1}^n h_i$. Based on constraints~\eqref{eq:sumX}, and~\eqref{eq:sumH}, then we get that:
\begin{align*}
H&=\beta_1\lambda+(1-\lambda)\beta_2 \\
 &= 2\beta_1-\lambda \beta_1 \\
X&= \lambda \beta_1(n-\beta_1) + (1-\lambda) \beta_2(n-\beta_2)\\
 &= \beta_1(n-\beta_1)+(1-\lambda)\beta_1 (n-3\beta_1)\\
X&= H(n-3\beta_1)+2\beta_1^2  
\end{align*}
Now to satisfy the $\alpha$-LP the relationship between $\frac{9}{8}X=X'$ and $\frac{9}{8}H=H'$ must be:
\[
X' \geq H' (n-H')
\]
So, we want to show that:
\[
\frac{9}{8} (H(n-3\beta_1)+2\beta_1^2) \geq \frac{9}{8} H (n-\frac{9}{8} H).
\]
Rewriting the above expression we get:
\[
\frac{9}{8} H^2 - 3\beta_1 H +2\beta_1^2 \geq 0.
\] 
The left hand side is a quadratic which has it's minimum at $H=\frac{4\beta_1}{3}$. Plugging in this value of $H$, we see the minimum value of the quadratic is $0$ as desired. By letting, $\alpha = H'=\frac{9}{8}H$ then we have also shown that constraints~\eqref{constalp:two} and~\eqref{constalp:one} are satisfied by $\frac{9}{8}(x,h)$. 

The above works as long as $H \leq \frac{4}{9}n$. For the case where $H>\frac{4}{9}n$, then we consider $\beta_1=n/4$. Plugging this into the equation relating $X$ and $H$ we get:
\[
X = H(n-3\beta_1)+2\beta_1^2  =H\frac{n}{4}+\frac{n^2}{8} 
\]
Now, if we scale by $c=\frac{n}{2H}$ we get $H'=cH=\frac{n}{2}$, and we now want 
\[
X' \geq H'(n-H')
\]
Plugging in the above variables we get:
\begin{align*}
X' &= cH\frac{n}{4}+\frac{n^2}{8} \\
&= \frac{n}{2}\cdot\frac{n}{4}+\frac{n^2}{8} \\
&=\frac{n^2}{4} \\
&= \frac{n}{2} \cdot \frac{n}{2} \\
&= H' (n-H')
\end{align*}
So, we have the desired result in this case. 

When $H$ is less than $\frac{4}{9}n$, then $\frac{9}{8}(x,h)$ is a solution the $\alpha$-LP for some $\alpha\leq \frac{n}{2}$. When $H \geq \frac{4}{9}n$, then $\frac{n}{2H}(x,h)$ is a solution to the $\alpha$-LP for $\alpha= \frac{n}{2}$.
\end{proof}

\begin{corollary}
\label{coro}
Either $\gamma$-LP always gives an answer of at least 
$\frac{8}{9}$ times the minimum global cut 
or we have produced, using $\gamma$-TSP, a compact relaxation stronger than 
the subtour relaxation.
\end{corollary}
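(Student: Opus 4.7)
The plan is to mirror the proof of the analogous $\alpha$-LP statement (the unlabeled theorem in the previous section) and to use Theorem~\ref{thm:alpobj} as a quantitative bridge that transfers everything from $\alpha$-LP to $\gamma$-LP at the cost of a factor $\tfrac{9}{8}$ in the objective. The first step is to observe that the $\gamma$-LP satisfies the Hamilton cycle property with $\eta=\tfrac{16}{9}$. Indeed, for any feasible $(x,h,\lambda)$ for the $\gamma$-LP with objective $c\cdot x$, Theorem~\ref{thm:alpobj} gives an $\alpha$-LP-feasible pair $\tfrac{9}{8}(x,h)$ with objective $\tfrac{9}{8}\,c\cdot x$; and by Theorem~\ref{thm:ham} this is at least $2$ whenever $c$ is the incidence vector of a Hamilton cycle, so $c\cdot x\ge \tfrac{16}{9}$.

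Next I would define $\gamma$-TSP in exactly the same way $\alpha$-TSP was defined from $\alpha$-LP. That is, take the dual of the $\gamma$-LP, use the TSP edge variables $t$ as the right-hand side on the $x$-columns of the constraint matrix (and $0$ on the $h,\lambda$ columns), and impose the cut-off $y^\gamma\cdot b^\gamma \ge \tfrac{16}{9}$ coming from the Hamilton cycle constant just established. This is a valid TSP relaxation because every Hamilton cycle incidence vector $t$ remains feasible by the Hamilton cycle property of $\gamma$-LP.

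Now I would run the two-step argument from the $\alpha$-LP case verbatim. Suppose the first alternative of the corollary fails, i.e.\ there is some $c'$ with $\gamma$-LP$(c') < \tfrac{8}{9}\,\mathrm{mincut}(c')$. Normalize so that $\mathrm{mincut}(c')=2$; then $\gamma$-LP$(c')<\tfrac{16}{9}$. Apply Lov\'asz splitting off exactly as in Lemma~\ref{splitoff}: at each step we split at a node $k$ only with two of its current neighbors $i,j$, so $c_{i,k},c_{j,k}>0$ and the relevant triangle inequality $x^*_{i,j}\le x^*_{i,k}+x^*_{k,j}$ is present in the $\gamma$-LP (whose triangle constraints are indexed exactly by the condition $c_{j,k}>0$). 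Each splitting step therefore does not increase the value $c\cdot x^*$, so we arrive at a subtour extreme point $c^*$ with $\gamma$-LP value still strictly less than $\tfrac{16}{9}$. Finally, weak duality together with the constraint $y^\gamma\cdot b^\gamma\ge \tfrac{16}{9}$ shows $c^*$ cannot be feasible for $\gamma$-TSP, so $\gamma$-TSP strictly separates a subtour extreme point from the TSP polytope and is thereby stronger than the subtour relaxation.

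The main technical thing to check is that the Lov\'asz splitting argument still composes with the restricted triangle family of the $\gamma$-LP; this was already the delicate point in Lemma~\ref{splitoff} and, as indicated above, the choice of splitting pairs $(i,j)$ at neighbors of $k$ makes the needed inequality available. Everything else is direct bookkeeping, combining Theorem~\ref{thm:alpobj}, Theorem~\ref{thm:ham}, and the duality setup already developed for $\alpha$-TSP.
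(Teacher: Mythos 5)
Your overall strategy is exactly the one the paper intends: the paper states Corollary~\ref{coro} with no proof at all, leaving it as the combination of Theorem~\ref{thm:alpobj} (the $\tfrac{9}{8}$ bridge to the $\alpha$-LP), the Hamilton cycle property, and the $\alpha$-TSP duality/splitting-off machinery, which is precisely what you assemble. Your derivation of the $\tfrac{16}{9}$ Hamilton-cycle constant is fine (note only that in the regime $H>\tfrac{4}{9}n$ the proof of Theorem~\ref{thm:alpobj} actually scales by $\tfrac{n}{2H}\le\tfrac{9}{8}$ rather than by $\tfrac{9}{8}$ itself, but since the scaling factor never exceeds $\tfrac{9}{8}$ the bound $c\cdot x\ge\tfrac{16}{9}$ still follows), and the weak-duality separation at the end is correct. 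One cosmetic point: as in the paper's $\alpha$-TSP lemma, ``stronger than the subtour relaxation'' requires first intersecting $\gamma$-TSP with a compact subtour relaxation, since $\gamma$-TSP alone need not be contained in the subtour polytope; you should keep that caveat.

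The one place where your justification as written does not hold is the claim that at \emph{every} Lov\'asz splitting step the needed triangle inequality is literally a constraint of the $\gamma$-LP. The triangle constraints~\eqref{eq:tri} are indexed by the support of the \emph{original} objective $c'$, but splitting off creates new edges: after the first round, a node $k$ may be joined to its current neighbors $i,j$ by edges with $c'_{i,k}=c'_{j,k}=0$, in which case no constraint of the form $x^*_{i,j}\le x^*_{i,k}+x^*_{k,j}$ appears in the LP. The repair is Lemma~\ref{lem:triangleInequalities}: every edge created during splitting off corresponds (by induction on the splitting history) to a path in the support graph of $c'$, so the final $c^*$ admits a decomposition into paths of $G(c')$ with total usage of each original edge $e$ at most $c'_e$; applying $x^*_{u,v}\le x^*(P_{u,v})$ to each such path and summing gives $c^*\cdot x^*\le c'\cdot x^*<\tfrac{16}{9}$ directly, without needing the per-step triangle inequality for created triples. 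You should make this substitution explicit rather than asserting that ``the choice of splitting pairs at neighbors of $k$ makes the needed inequality available,'' which is only guaranteed when at least one of the two split edges lies in the original support. (The paper leaves the same gap in its unproven companion lemma for the $\alpha$-LP; Lemma~\ref{lem:triangleInequalities} exists precisely to close it.)
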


Let $0<\eps\leq1$.  Partition the interval from 1 to $\frac{n}{2}$ into $q = \lceil{\log_{1+\eps}\frac{n}{2}}\rceil$ sets
with boundary points at $\{1,p_1,p_2,...,p_{q-1},\frac{n}{2}\}$ where $p_\gamma = (1+\eps)^{\gamma}$.
For the $\gamma^{th}$ partition, $\beta_1 = (1+\eps)^{\gamma-1}$ and $\beta_2 = (1+\eps)^{\gamma} = (1+\eps)\beta_1$.
Using disjunctive programming on the $q$ linear programs, we obtain an approximate minimum cut LP with symmetry. 
\begin{displaymath}
\begin{array}{lllll}
\mbox{minimize} & c\cdot x \\
\mbox{subject to} \\
&A^{\gamma}z^{\gamma} \geq b^{\gamma}\lambda_{\gamma} \\
&z^{\gamma} \geq 0 \\
&z = \sum_{\gamma}z^{\gamma} \\
&\sum_{\gamma}\lambda_{\gamma} = 1, \, \lambda \geq 0.
\end{array}
\end{displaymath}

Note that when $\eps=1$, the disjunction of $\gamma$-LPs has size $O(mn\log{n})\times O(n^2 \log n)$ and integrality gap of $\frac{9}{8}$. If the $\gamma$-LP cannot be used to strengthen the subtour elimination LP for the TSP (by Corollary \ref{coro}), then
this outperforms all other known minimum global cut LPs when $m = O(n^k)$, $1<k<2$.
\section{Conclusion}

In this paper, we proved that a linear program relaxation of the minimum cut problem with the Hamiltonian cycle property 
has the minimum cut as its lower bound,
or its dual can be used to strengthen the subtour elimination linear program for the TSP. In addition, we've also shown that the $\alpha$-LP and $\beta$-LP both satisfy the Hamiltonian cycle property. So, either these linear programs
are smaller than the standard programs
used to compute minimum cuts, or they can be used to get a strengthening of the subtour elimination linear program. In addition, we provided evidence that in some special cases, based on the structure of the solution, the
$\alpha$-LP 
has the minimum cut as a lower bound.
In addition to these linear programs, we can extend the $\beta$-LP to the $\gamma$-LP. Even though the $\gamma$-LP, with $\eps=1$, is only an $\frac{8}{9}$-approximation to the $\beta$-LP, we only need to solve $O(\log n)$ instead of $O(n)$ linear programs in the disjunction of the $\gamma$-LPs for this approximation. This makes this disjunctive program the smallest linear program for the minimum global cut with a gap of $\frac{9}{8}$ or better, given that the program does not strengthen the subtour elimination LP for the TSP.

The big remaining question is whether or not the $\alpha$-LP and $\beta$-LP both have the minimum cut as a lower bound,
or whether they can be used to strengthen the subtour elimination linear program. 
\bibliographystyle{elsarticle-harv}
\bibliography{mincut}

\end{document}